\theoremstyle{plain}
\newtheorem{theorem}{Theorem}[section]
\newtheorem{proposition}[theorem]{Proposition}
\newtheorem{lemma}[theorem]{Lemma}
\newtheorem{corollary}[theorem]{Corollary}
\theoremstyle{definition}
\newtheorem{definition}[theorem]{Definition}
\newcommand{\ntbox}[1]{\,\,\makebox[0pt][c]{{$\Box$}}\makebox[0pt][c]{\raisebox{2pt}{\tiny\ensuremath{#1}}}\,\,}
\newcommand{\pure}{{(0)}} 
\newcommand{\cst}{{(1)}} 
\newcommand{\acc}{{(1)}} 
\newcommand{\modi}{{(2)}} 
\newcommand{\dec}{{(d)}} 
\newcommand{\decp}{{(d')}}
\newcommand{\eqbox}{\ntbox{\equiv}}
\newcommand{\eqs}{\cong} 
\newcommand{\eqw}{\sim}
\newcommand{\id}{\mathit{id}}
\newcommand{\cC}{\mathbf{C}}
\newcommand{\pr}{\mathit{pr}} 
\newcommand{\copr}{\mathit{in}} 
\newcommand{\pair}[1]{\langle #1 \rangle} 
\newcommand{\lpair}[1]{\langle #1 \rangle_l} 
\newcommand{\rpair}[1]{\langle #1 \rangle_r} 
\newcommand{\pa}{\langle \; \rangle} 
\newcommand{\copair}[1]{[ #1 ]} 
\newcommand{\lcopair}[1]{[ #1 ]_l} 
\newcommand{\copa}{[\;]} 
\newcommand{\eps}{\varepsilon} 
\newcommand{\To}{\Rightarrow}
\newcommand{\eqn}{\mathit{eq}}
\newcommand{\mon}{\mathit{mon}}
\newcommand{\exc}{\mathit{exc}}
\newcommand{\comon}{\mathit{com}}
\newcommand{\sta}{\mathit{st}}
\newcommand{\downcast}{{\downarrow}} 
\newcommand{\empt}{\mathbb{0}}
\newcommand{\unit}{\mathbb{1}}
\newcommand{\tagg}{\mathtt{tag}}
\newcommand{\untag}{\mathtt{untag}}
\newcommand{\throw}{\mathtt{throw}}
\newcommand{\try}{\mathtt{try}}
\newcommand{\catch}{\mathtt{catch}}
\newcommand{\TRY}{\mathtt{TRY}}
\newcommand{\lookup}{\mathtt{lookup}}
\newcommand{\update}{\mathtt{update}}
\newcommand{\squad}{\;\;}
\newcommand{\stimes}{\!\times\!}
\newcommand{\splus}{\!+\!}
\newcommand{\scolon}{\!\colon\!}
\newcommand{\sto}{\!\to\!}
\newcommand{\Ename}{\mathit{Exn}}
\newcommand{\Loc}{\mathit{Loc}}
\newcommand{\abr}{\boxed{\mathit{abrupt}}}
\newcommand{\nor}{\boxed{\mathit{normal}}}
\newcommand{\M}{M} 
\newcommand{\coM}{T} 
\newcommand{\isexc}{\txt{exc?}} 
\newcommand{\Tt}{\mathcal{T}} 
\newcommand{\Log}{\mathcal{L}} 
\newcommand{\rul}[1]{\textrm{(#1)}} 
\title{Patterns for computational effects arising from a monad or a comonad}
\author{Jean-Guillaume Dumas\thanks{Laboratoire J. Kuntzmann,
    Universit\'e de Grenoble. 51, rue des Math\'ematiques, umr CNRS
    5224, bp 53X, F38041 Grenoble, France,
    \href{mailto:Jean-Guillaume.Dumas@imag.fr,Dominique.Dumas@imag.fr}{\{Jean-Guillaume.Dumas,Dominique.Duval\}@imag.fr}.}
  \and Dominique Duval\footnotemark[1]
  \and Jean-Claude Reynaud\thanks{Reynaud Consulting (RC), \href{mailto:Jean-Claude.Reynaud@imag.fr}{Jean-Claude.Reynaud@imag.fr}.}
}
\begin{document}

\maketitle

\begin{abstract}
This paper presents equational-based logics for proving first order 
properties of programming languages involving effects. 

We propose two dual inference system patterns 
that can be instanciated with monads or comonads 
in order to be used for proving properties of different effects.
The first pattern provides inference rules which can be interpreted 
in the Kleisli category of a monad and the coKleisli category 
of the associated comonad. In a dual way, the second pattern 
provides inference rules which can be interpreted in the 
coKleisli category of a comonad and the Kleisli category of the 
associated monad. The logics combine a 3-tier effect system 
for terms consisting of pure terms and two other kinds of effects 
called 'constructors/observers' and 'modifiers', 
and a 2-tier system for 'up-to-effects' and 'strong' equations. 
Each pattern provides generic rules for dealing with any monad 
(respectively comonad), and it can be extended with specific rules 
for each effect. The paper presents two use cases: 
a language with exceptions (using the standard monadic semantics), 
and a language with state (using the less standard comonadic semantics). 
Finally, we prove that the obtained inference system for states 
is Hilbert-Post complete.

\end{abstract}

\section{Introduction} 

A \emph{software design pattern} is not a finished design,
it is a description or template that can 
be instanciated in order to be used in many different situations. 
In this paper, we propose \emph{inference system patterns}  
that can be instanciated with monads or comonads 
in order to be used for proving properties of different effects.

In order to formalize computational effects one can choose 
between types and effects systems \cite{LucassenGifford88}, 
monads \cite{Moggi91} and their associated Lawvere theories 
\cite{PlotkinPower02}, comonads \cite{UustaluVene08}, 
or decorated logics \cite{DD10-dialog}. 
Starting with Moggi's seminal paper \cite{Moggi91} 
and its application to Haskell \cite{Wa92}, 
various papers deal with the effects arising from a monad,
for instance \cite{PlotkinPower02,SM04,Le06,PP09}.

Each of these approaches rely on some classification of the syntactic 
expressions according to their interaction with effects. 
In this paper we use decorated logics which, by extending this classification 
to equations, provide a proof system adapted to each effect. 

This paper presents equational-based logics for proving first order 
properties of programming languages involving effects. 
We propose two dual patterns, consisting in a language with 
an inference system, for building such a logic. 

The first pattern provides inference rules which can be interpreted 
in the coKleisli category of a comonad and the Kleisli category 
of the associated monad. In a dual way, the second pattern 
provides inference rules which can be interpreted in the 
Kleisli category of a monad and the coKleisli category of the 
associated comonad. The logics combine a three-levels effect system 
for terms consisting of pure terms and two other kinds of effects 
called observers/constructors and modifiers, 
and a two-levels system for strong and weak equations. 

Each pattern provides generic rules for dealing with any comonad 
(respectively monad), and it can be extended with specific rules 
for each effect. The paper presents two use cases:  
a language with state and a language with exceptions. 
For the language with state we use a comonadic semantics 
and we prove that the equational theory obtained is Hilbert-Post complete,
which provides a new proof for a result in \cite{Pretnar10}. 
For the language with exceptions we extend the standard monadic semantics 
in order to catch exceptions; this relies on the duality between states 
and exceptions from~\cite{DDFR12-dual}.  

We do not claim that each effect arises either from a comonad or from
a monad, but this paper only deals with such effects.  
Intuitively, an effect which observes features may arise from a comonad, 
while an effect which constructs features may arise from a monad
\cite{JacobsRutten11}.
However, some interesting features in the comonad pattern stem from the 
well-known fact 
that each comonad determines a monad on its coKleisli category, 
and dually for the monad pattern. 
More precisely, on the monads side, 
let $(\M,\eta,\mu)$ be a monad on a category $\cC^\pure$ 
and let $\cC^\cst$ be the Kleisli category of $(\M,\eta,\mu)$ on $\cC^\pure$.
Then $\M$ can be seen as the endofunctor of a comonad $(\M,\eps,\delta)$ 
on $\cC^\cst$, so that we may consider the coKleisli category $\cC^\modi$ 
of $(\M,\eps,\delta)$ on $\cC^\cst$.
The canonical functors from $\cC^\pure$ to $\cC^\cst$ 
and from $\cC^\cst$ to $\cC^\modi$ give rise to a hierarchy of terms: 
pure terms in $\cC^\pure$, constructors in $\cC^\cst$, modifiers in $\cC^\modi$. 
This corresponds to the three translations
of a typed lambda calculus into a monadic language \cite{Wa92}.

On the comonads side, we get a dual hierarchy: 
pure terms in $\cC^\pure$, observers in $\cC^\cst$, modifiers in $\cC^\modi$. 

We instanciate these patterns with  
two fundamental examples of effects: state and exceptions. 

Following \cite{DDFR12-dual}, we consider that 
the states effect arise from the comonad $A\times S$ 
(where $S$ is the set of states), 
thus a decorated logic for states is built by extending 
the pattern for comonads.
The comonad itself provides a decoration for the lookup operation, 
which observes the state, 
while the monad on its coKleisli category 
provides a decoration for the update operation.

Following \cite{Moggi91}, we consider that 
the exceptions effect arise from the monad $A+E$ 
(where $E$ is the set of exceptions),
thus a decorated logic for exceptions is built by extending 
the pattern for monads.
The monad itself provides a decoration for the raising operation, 
which constructs an exception, 
while the comonad on its Kleisli category 
provides a decoration for the handling operation.

In fact the decorated logic for exceptions is not exactly dual to 
the decorated logic for states if we assume that 
the intended interpretation takes place in a  
distributive category, like the category of sets, 
which is not codistributive. 

Other effects would lead to other additional rules, but we 
have chosen to focus on two effects which are well known 
from various points of view. 
Our goal is to enligthen the contributions of each approach: 
the annotation system from the types and effects 
systems \cite{LucassenGifford88}, 
the major role of monads for some effects \cite{Moggi91},
and the dual role of comonads \cite{UustaluVene08}, 
as well as the flexibility of decorated logics \cite{DD10-dialog}. 
Moreover, proofs in decorated logics can be checked 
with the Coq proof assistant; a library for states is available there:
\url{http://coqeffects.forge.imag.fr}.

In this paper we focus on finite products and coproducts; 
from a programming point of view this means that we are 
considering languages with $n$-ary operations and with case distinction,
but without loops or higher-order functions. 
In a language with effects there is a well-known issue with 
$n$-ary operations: their interpretation may depend on the order 
of evaluation of their arguments. In this paper 
we are looking for languages with case distinction 
and with \emph{sequential products}, 
which allows to force the order of evaluation of the arguments, 
whenever this is required. 

It is well known that (co)monads fit very well with composition 
but require additional assumptions for being 
fully compatible with products and coproducts.
This corresponds to the fact that 
in the patterns from Section~\ref{sec:patterns},
which are valid for any (co)monad, 
the rules for products and coproducts 
hold only under some decoration constraints. 
However, such assumptions are satisfied for several (co)monads.
This is in particular the case for the state comonad and the 
exceptions monad.

In Section~\ref{sec:patterns} we describe the patterns for 
a comonad and for a monad. 
The first pattern is instanciated with the comonad for state 
in Section~\ref{sec:states},
and we prove the Hilbert-Post completeness of the decorated theory for state. 
In Section~\ref{sec:exceptions} we instanciate the second pattern 
to the monad for exceptions.

\section{Patterns for comonads and for monads } 
\label{sec:patterns} 

\subsection{Equational logic with conditionals}\

In this Section we define 
a grammar and an inference system for two logics $\Log_\comon$ and $\Log_\mon$,
then we define an interpretation of these logics in 
a category with a comonad and a monad, respectively. 

The logics $\Log_\comon$ and $\Log_\mon$ are called \emph{decorated} logics 
because their grammar and inference rules are essentially the 
grammar and inference rules for a ``usual'' logic,
namely the equational logic with conditionals (denoted $\Log_\eqn$), 
together with \emph{decorations} for the terms and for the equations.
The decorations for the terms are similar to the \emph{annotations} 
of the types and effects systems \cite{LucassenGifford88}. 

Decorated logics are introduced in \cite{DD10-dialog} 
in an abstract categorical framework, which will not be explicitly used 
in this paper.  

The grammar of the equational logic with conditionals 
is reminded in Figure~\ref{fig:logeq-grammar}. 
Each term has a source type and a target type.
As usual in categorical presentations of equational logic, 
a term has precisely one source type,  
which can be a product type or the unit type. 
Each equation relates two parallel terms, i.e.,
two terms with the same source and the same target. 
This grammar will be extended with decorations 

in order to get the grammar of the logics $\Log_\comon$ and $\Log_\mon$.

\begin{figure}[htbp]   
\renewcommand{\arraystretch}{1.3}
$$ \begin{array}{|ll|} 
\hline
\multicolumn{2}{|c|}{
\mbox{ Grammar for the equational logic with conditionals: }} \\
\hline
\textrm{Types: } & t::= 
   A\mid B\mid \dots\mid t+t\mid\empt\mid  t\times t\mid \unit  \\ 
\textrm{Terms: } & f::=   \id_t\mid f\circ f\mid  \\
& \qquad \pair{f,f} \mid \pr_{t,t,1}\mid \pr_{t,t,2}\mid \pa_t  \\ 
& \qquad \copair{f|f}\mid \copr_{t,t,1}\mid\copr_{t,t,2}\mid\copa_t\mid  \\
 \textrm{Equations: } & e::=  f\equiv f   \\  
\hline
\end{array}$$
\renewcommand{\arraystretch}{1}
\caption{Equational logic with conditionals: grammar} 
\label{fig:logeq-grammar} 
\end{figure}

\subsection{Patterns}\

The rules in Figure~\ref{fig:pattern-rules} 
are \emph{patterns}, in the following sense:
when the boxes in the rules are removed, 
we get usual rules for the logic $\Log_\eqn$,
which may be interpreted in any bicartesian category.
When the boxes are replaced by decorations, 
we get a logic which, according to the choice of decorations, 
may be interpreted in a bicartesian category with a comonad or a monad.
There may be other ways to decorate the rules for $\Log_\eqn$, 
but this is beyond the scope of this paper.

\begin{figure}[htbp]   
\renewcommand{\arraystretch}{2}
$$ \begin{array}{|l|} 
\hline
\mbox{congruence rules} \\ 
\rul{refl} \quad 
  \dfrac{f^\Box}{f \eqbox f} \qquad
\rul{sym} \quad 
  \dfrac{f^\Box \eqbox g^\Box}{g \eqbox f}  \qquad
\rul{trans} \quad 
  \dfrac{f^\Box \eqbox g^\Box \squad g^\Box \eqbox h^\Box}{f \eqbox h}  \\ 
\rul{repl} \quad 
  \dfrac{f_1^\Box\eqbox f_2^\Box\colon A\to B \squad g^\Box\colon B\to C}
    {g\circ f_1 \eqbox g\circ f_2 }  \qquad
\rul{subs} \quad 
  \dfrac{f^\Box\colon A\to B \squad g_1^\Box\eqbox g_2^\Box\colon B\to C}
    {g_1 \circ f \eqbox g_2\circ f } \\
\hline
\mbox{categorical rules} \\ 
\rul{id} \quad
  \dfrac{A}{\id_A^\Box\colon A\to A } \qquad 
\rul{comp} \quad
  \dfrac{f^\Box\colon A\to B \quad g^\Box\colon B\to C}
    {(g\circ f)^\Box \colon A\to C}  \\
\rul{id-source} \quad 
  \dfrac{f^\Box\colon A\to B}{f\circ \id_A \eqbox f} \qquad 
\rul{id-target} \quad 
  \dfrac{f^\Box\colon A\to B}{\id_B\circ f \eqbox f} \\
\rul{assoc} \quad 
  \dfrac{f^\Box\colon A\to B \squad g^\Box\colon B\to C \squad h^\Box\colon C\to D}
  {h\circ (g\circ f) \eqbox (h\circ g)\circ f}  \\
\hline
\mbox{product rules} \\ 
\rul{prod} \quad
  \dfrac{B_1 \quad B_2 }
    {B_1\stimes B_2 \quad \pr_1^\Box\colon B_1\stimes B_2 \to B_1 \quad 
    \pr_2^\Box\colon B_1\stimes B_2 \to B_2}  \\ 
\rul{pair}  \quad
  \dfrac{ f_1^\Box\colon  A \to B_1 \quad f_2^\Box\colon  A \to B_2}
    {\pair{f_1,f_2}^\Box\colon  A\to B_1\stimes B_2} \\
\rul{pair-eq} \quad 
  \dfrac{f_1^\Box\colon A\to B_1 \squad f_2^\Box\colon A\to B_2}
    {\pr_1\circ\pair{f_1,f_2} \eqbox f_1 \quad
  \pr_2\circ\pair{f_1,f_2} \eqbox f_2 }  \\
\rul{pair-u} \quad 
  \dfrac{f_1^\Box\scolon A\!\sto\! B_1 \squad
    f_2^\Box\scolon A\!\sto\! B_2 \squad g^\Box\scolon A\!\sto\! B_1\stimes B_2 
    \squad \pr_1\circ g\eqbox f_1 \squad \pr_2\circ g\eqbox f_2 }
    {g \eqbox \pair{f_1,f_2}} \\
\rul{unit} \quad
  \dfrac{}{\unit} \qquad
\rul{final} \quad
  \dfrac{A}{\pa_A^\Box\colon A\to \unit} \qquad
\rul{final-u} \quad 
  \dfrac{f^\Box\colon A\to \unit}{f \eqbox \pa_A} \\
\hline
\mbox{coproduct rules} \\ 
\rul{coprod} \quad
  \dfrac{A_1 \quad A_2}
    {A_1\splus A_2 \quad \copr_1^\Box\colon A_1\to A_1\splus A_2 \quad 
    \copr_2^\Box\colon A_2\to A_1\splus A_2 } \\ 
\rul{copair} \quad 
  \dfrac{ f_1^\Box\colon  A_1 \to B \quad f_2^\Box\colon  A_2 \to B}
    {\copair{f_1|f_2}^\Box\colon  A_1\splus A_2 \to B } \\
\rul{copair-eq} \quad 
  \dfrac{ f_1^\Box\colon  A_1 \to B \squad f_2^\Box\colon  A_2 \to B}
    {\copair{f_1|f_2} \circ \copr_1 \eqbox  f_1 \quad
    \copair{f_1|f_2} \circ \copr_2 \eqbox  f_2  } \\
\rul{copair-u} \quad 
  \dfrac{g^\Box\scolon A_1\!\splus\! A_2 \sto B \squad 
    f_1^\Box\scolon A_1 \sto B 
    \squad f_2^\Box\scolon A_2 \sto B \squad 
    g\circ \copr_1 \eqbox f_1 \squad g\circ \copr_2 \eqbox f_2 }
    {g \eqbox \copair{f_1|f_2}}  \\
\rul{empty} \quad
  \dfrac{}{\empt} \qquad
\rul{initial} \quad
  \dfrac{B}{\copa_B^\Box\colon \empt\to B} \qquad
\rul{initial-u} \quad 
  \dfrac{g^\Box\colon \empt\to B}{g \eqbox \copa_B} \\
\hline 
\end{array}$$
\renewcommand{\arraystretch}{1}
\caption{Patterns: rules} 
\label{fig:pattern-rules} 
\end{figure}

\subsection{A decorated logic for a comonad}\

In the logic $\Log_\comon$ for comonads, 
each term has a decoration which is denoted as a superscript
$\pure$, $\acc$ or $\modi$:
a term is \emph{pure} when its decoration is $\pure$,
it is an \emph{accessor} (or an \emph{observer}) when its decoration is $\acc$
and a \emph{modifier} when its decoration is $\modi$. 
Each equation has a decoration which is denoted by 
replacing the symbol $\equiv$ either by $\eqs$ or by $\eqw$:
an equation with $\eqs$ is called \emph{strong},
with $\eqw$ it is called \emph{weak}.

The inference rules of $\Log_\comon$ are obtained by 
introducing some \emph{conversion} rules 
and by decorating the rules in Figure~\ref{fig:pattern-rules}.

When writing terms, if a decoration does not matter  
or if it is clear from the context, it may be omitted.  
\begin{itemize}
\item 
The conversion rules are:
$$ \begin{array}{|c|}
\hline
\dfrac{f^\pure}{f^\acc} \qquad \dfrac{f^\acc}{f^\modi} 
  \qquad \dfrac{f^\dec\eqs g^\decp}{f\eqw g} \mbox{ for all } d,d' \qquad
\dfrac{f^\dec\eqw g^\decp}{f\eqs g} \mbox{ for all } d,d'\leq 1 \\
\hline
\end{array}$$
The conversions for terms are \emph{upcasting} conversions.

We will always use them in a \emph{safe} way, by 
interpreting them as injections. 
This allows to avoid any specific notation for these conversions; 
an accessor $a^\acc$ may be converted to a modifier 
which is denoted $a^\modi$:  
both have the same name although they are distinct terms;
similarly, a pure term $v^\pure$ may be converted to $v^\acc$ or to $v^\modi$.
An equation between terms with distinct decorations 
does not imply any downcasting of its members; 
for instance, if $f^\modi\eqs g^\pure$ then it does not follow 
that $f$ is downcasted to $f^\pure$. 
The conversions for equations mean that strong and weak equations
coincide on pure terms and accessors and that 
each strong equation between modifiers can be seen as a weak one.
\item 
All rules of $\Log_\eqn$ are decorated with $\pure$ for terms and $\eqs$ for 
equations: the pure terms with the strong equations form 
a sublogic of $\Log_\comon$ which is isomorphic to $\Log_\eqn$. 
Thus we get $\id^\pure$, $\pr^\pure$, $\pa^\pure$, $\copr^\pure$, $\copa^\pure$.
\item 
The congruence rules for equations take all decorations for terms 
and for equations, with one notable exception:
the replacement rule for weak equations 
holds only when the replaced term is pure:
$$ \begin{array}{|c|}
\hline
  \rul{repl} \quad 
  \dfrac{f_1^\dec\eqw f_2^\decp\colon A\to B \squad g^\pure\colon B\to C}
    {g\circ f_1 \eqw g\circ f_2 } \\
\hline
\end{array}$$
\item 
The categorical rules hold for all decorations and
the decoration of a composed terms is 
the maximum of the decorations of its components. 
\item 
The product rules hold only when the given terms are pure or accessors and
the decoration of a pair is 
the maximum of the decorations of its components. 
Thus, $n$-ary operations can be used only when their arguments are 
accessors. 
 \item 
The coproduct rules hold only when the given terms are pure and
a copair is always pure, which is 
the maximum of the decorations of its components.
Thus, case distinction can be done only for pure terms.
\end{itemize}

\subsection{The interpretation of \texorpdfstring{$\Log_\comon$}{Lcom} by a comonad}\

In order to give a meaning to the logic $\Log_\comon$, 
let us consider a bicartesian category $\cC$ with 
a comonad $(\coM,\eps,\delta)$ satisfying the epi requirement,
i.e., $\eps_A:\coM A \to A$ is an epimorphism for each object $A$
(the dual assumption is discussed in \cite{Moggi91}). 

Then we get a model $\cC_\coM$ of the decorated logic $\Log_\comon$
as follows.
\begin{itemize}
\item The types are interpreted as the objects of $\cC$.
\item The terms are interpreted as morphisms of $\cC$: 
a pure term $f^\pure\colon A\!\to\! B$ as a morphism 
$f\colon A\!\to\! B$ in~$\cC$;
an accessor $f^\acc\colon\! A\!\to\! B$ as a morphism 
$f\colon\! \coM A\!\to\! B$ in~$\cC$;
and a modifier $f^\modi\colon A\!\to\! B$ as a morphism 
$f\colon \coM A\!\to\! \coM B$ in~$\cC$.
\item The conversion from pure terms to accessors 
is interpreted by mapping $f\colon A\to B$ to 
$f\circ \eps_A \colon  \coM A\to B$.
The epi requirement implies that this conversion is safe.
\item The conversion from accessors to modifiers
is interpreted by mapping $f\colon \coM A\to B$ to 
$\coM f\circ\delta_A \colon  \coM A\to \coM B$. 
It is easy to check that this conversion is safe.

\item When a term $f$ 
has several decorations (because it is pure or accessor, 
and thus can be upcasted) we will denote by $f$ 
any one of its interpretations:
a pure term $f^\pure\colon A\!\to\! B$ may be interpreted 
as $f\colon A\!\to\! B$ and as 
$f\colon \coM A\!\to\! B$ and as 
$f\colon \coM A\!\to\! \coM B$,
and an accessor $f^\acc\colon\! A\!\to\! B$ as 
$f\colon \coM A\!\to\! B$ and as 
$f\colon \coM A\!\to\! \coM B$. 
The choice will be clear from the context, and when several choices 
are possible they will give the same result, up to conversions. 
For this reason, we will describe the interpretation of the rules 
only for the largest possible decorations. 
\item The identity $\id_A^\pure\colon A\to A$ is interpreted as
$\id_A\colon A\to A$ in $\cC$;
\item The composition of two modifiers $f^\modi\colon A\to B$ and 
$g^\modi\colon B\to C$ is interpreted as $g\circ f\colon \coM A\to \coM B$ 
in~$\cC$. 
\item An equation between modifiers $f^\modi\eqs g^\modi\colon A\to B$ 
is interpreted by an equality $f=g\colon \coM A\to \coM B$ in~$\cC$. 
\item A weak equation between modifiers $f^\modi\eqw g^\modi\colon A\to B$ 
is interpreted by an equality 
$\eps_B\circ f=\eps_B\circ g\colon \coM A\to B$ in $\cC$. 
\item The unit type is interpreted as the final object of $\cC$
and the term $\pa_A^\pure\colon A\to\unit$ as the unique morphism 
from $A$ to $\unit$ in~$\cC$.
\item The product $B_1\times B_2$ with its projections 
is interpreted as the binary product in~$\cC$
and the pair of $f_1^\pure\colon A\to B_1$ and $f_2^\pure\colon A\to B_2$ 
as the pair $\pair{f_1,f_2}\colon A\to B_1\times B_2$ in~$\cC$. 
\item The empty type is interpreted as the initial object of $\cC$
and the term $\copa_A^\pure\colon \empt\to A$ as the unique morphism 
from $\empt$ to $A$ in~$\cC$.
\item The coproduct $A_1+A_2$ with its coprojections 
is interpreted as the binary coproduct in $\cC$ and 
the copair of $f_1^\acc\colon A_1\to B$ and $f_2^\acc\colon A_2\to B$ 
as the copair $\copair{f_1|f_2} \colon A_1+A_2\to \coM B$ in~$\cC$. 
\end{itemize}

\subsection{A decorated logic for a monad}\

The dual of the decorated logic $\Log_\comon$ for a comonad 
is the decorated logic $\Log_\mon$ for a monad. 

Thus, the grammar of $\Log_\mon$ is the same as the grammar of $\Log_\comon$,
but a term with decoration $\cst$ is now called a \emph{constructor}.

The rules for $\Log_\mon$ are nearly the same 
as the corresponding rules for $\Log_\comon$, except that
for weak equations the replacement rule always holds 
while the substitution rule holds only when the substituted term is pure:
$$ \begin{array}{|c|}
\hline
  \rul{subs} \quad 
  \dfrac{f^\pure\colon A\to B \squad g_1^\dec\eqw g_2^\decp\colon B\to C}
    {g_1 \circ f \eqw g_2\circ f } \\
\hline
\end{array}$$
In the rules for pairs and copairs, the decorations are permuted.  

The logic $\Log_\mon$ can be interpreted dually to $\Log_\comon$.
Let $\cC$ be a bicartesian category 
and $(\M,\eta,\mu)$ a monad on $\cC$ satisfying the mono requirement,
which means that $\eta_A:A\to \M A$ is a monomorphism for each object $A$. 
Then we get a model $\cC_{\M}$ of the decorated logic $\Log_\mon$, where 

a constructor $f^\cst\colon\! A\!\to\! B$ 
is interpreted as a morphism $f\colon A\to \M B$ in~$\cC$ 

and a weak equation $f^\modi\eqw g^\modi\colon A\to B$ is interpreted 
as an equality 
$f\circ\eta_A=g\circ\eta_A\colon A\to \coM B$ in $\cC$.

\section{States: an instance of the pattern for comonads}
\label{sec:states}

\subsection{A decorated logic for state}\

Let us consider a distributive category $\cC$ 
with epimorphic projections 
and with a distinguished object $S$ called the \emph{object of states}.
We consider the comonad $(\coM,\eps,\delta)$  
with endofunctor $\coM A=A\times S$, 
with counit $\eps$ made of the projections $\eps_A\colon A\times S\to A$, and 
with comultiplication $\delta$ which ``duplicates'' the states, 
in the sense that $\delta_A=\pair{\id_{A\times S}|\pr_A}\colon 
A\times S \to (A\times S)\times S$ where 
$\pr_A\colon A\times S\to A$ is the projection. 

We call this comonad the \emph{comonad of state}.
It is sometimes called the \emph{product comonad}, 
and it is different from the \emph{costate comonad} or
\emph{store comonad} with endofuntor $ \coM A=S\times A^S $
\cite{GJ12}.

The category $\cC$ with the comonad of states 
provides a model of the logic $\Log_\comon$. 
We can extend $\Log_\comon$ into a logic $\Log_\sta$  
dedicated to the state comonad. 

First, because of the specific choice of the comonad $\coM A=A\times S$,
we can add new decorations to the rule patterns for pairs in $\Log_\comon$, 
involving modifiers: 
there is a \emph{left pair} $\lpair{f_1,f_2}^\modi$ 
of an accessor $f_1^\acc$ and a modifier $f_2^\modi$,
satisfying the first three rules in Figure~\ref{fig:state-prod}. 
There are also three rules (omitted), symmetric to these ones, for the 
\emph{right pair} $\rpair{f_1,f_2}^\modi$ 
of a modifier $f_1^\modi$ and an accessor $f_2^\acc$.

The interpretation of the left pair $\lpair{f_1,f_2}^\modi:
A\to B_1\times B_2$ is the pair 
$\pair{f_1,f_2}:A\times S \to B_1\times B_2\times S$ 
of $f_1:A\times S\to B_1$ and $f_2:A\times S \to B_2\times S$.

Moreover, the rule \rul{effect} expresses the fact that,
when $\coM A=A\times S$, two modifiers coincide as soon as 
they return the same result and modify the state in the same way.

\begin{figure}[htbp]   
\renewcommand{\arraystretch}{2}
$$ \begin{array}{|ll|} 
\hline
\rul{l-pair} &
\dfrac{ f_1^\cst\colon A\to B_1 \quad f_2^\modi\colon  A \to B_2}
  {\lcopair{f_1|f_2}^\modi\colon  A \to B_1\times B_2 }  \\
\rul{l-pair-eq} &
\dfrac{ f_1^\cst\colon A\to B_1 \quad f_2^\modi\colon  A \to B_2}
  {\pr_1^\pure \circ \lpair{f_1,f_2}^\modi \eqw f_1^\cst \quad 
   \pr_2^\pure \circ \lpair{f_1,f_2}^\modi \eqs f_2^\modi  } \\
\rul{l-pair-u} &
\dfrac{g^\modi\scolon A \sto B_1\stimes B_2 \squad 
  f_1^\cst\scolon A\sto B_1 \quad f_2^\modi\scolon  A \sto B_2 \squad 
  \pr_1^\pure \circ g \eqw f_1 \squad \pr_2^\pure \circ g \eqs f_2}
  {g^\modi \eqs \lpair{f_1,f_2}^\modi}    \\
\hline
\rul{effect} &
\dfrac{f,g\colon A \to B \quad f\eqw g \quad 
  \pa_A\circ f \eqs \pa_A\circ g}
  {f\eqs g} \\ 
\hline
\end{array} $$
\renewcommand{\arraystretch}{1}
\caption{$\Log_\sta$: additional rules for products} 
\label{fig:state-prod} 
\end{figure}

For each set $\Loc$ of \emph{locations} (or identifiers),
additional grammar and rules for the logic $\Log_\sta$
are given in Figure~\ref{fig:state-lookup}.
We extend the grammar of $\Log_\comon$ 
with a type $V_X$, an accessor $\lookup_X^\acc:\unit\to V_X$
and a modifier $\update_X^\modi: V_T\to\unit$ for each location $X$,
and we also extend its rules. 

The rule \rul{local-global} asserts that two functions 
without result 
coincide as soon as they coincide when observed at each location.
Together with the rule \rul{effect} it implies that two functions 
coincide as soon as they return the same value 
and coincide on each location.

\begin{figure}[htbp]   
\renewcommand{\arraystretch}{2}
$$ \begin{array}{|ll|} 
\hline
\textrm{Types: } & t::=\;  V_X  \;\; \mbox{ for each } X\in\Loc \\ 
\textrm{Terms: } & f::=\;  \lookup_X \mid \update_X  
   \;\; \mbox{ for each } X\in\Loc  \\ 
\hline
\rul{lookup} & 
\dfrac{X\in\Loc}{\lookup_X^\acc\colon \unit\to V_X}  \\ 
\rul{update} & 
\dfrac{X\in\Loc}{\update_X^\modi\colon V_X\to\unit} \\ 
\rul{lookupdate} & 
\dfrac{X\in\Loc}{\lookup_X\circ \update_X \eqw \id_{V_X}} 
\quad
\dfrac{X,Y\in\Loc\squad X\ne Y}
  {\lookup_Y\circ \update_X \eqw \lookup_Y\circ \pa_{V_X} }  \\ 
\rul{local-global} & 
\dfrac{ f,g\colon A\to \unit \quad \mbox{for all }X\in \Loc\; 
  \lookup_X\circ f \eqw \lookup_X\circ g}
  {f\eqs g} \\
\hline
\end{array} $$
\renewcommand{\arraystretch}{1}
\caption{$\Log_\sta$: additional grammar and rules for states} 
\label{fig:state-lookup} 
\end{figure}

For each family of objects $(V_X)_{X\in\Loc}$ in $\cC$  
such that $S\cong \prod_{X\in\Loc}V_X$ 
we build a model $\cC_\sta$ of $\Log_\sta$,
which extends the model the model $\cC_\coM$ of $\Log_\comon$
with functions for looking up and updating the locations.

The types $V_X$ are interpreted as the objects $V_X$ and  
the accessors $\lookup_X^\acc:\unit\to V_X$ as the projections 
from $S$ to $V_X$. Then the interpretation of each modifier 
$\update_X^\modi:V_X\to \unit$ is the function from $V_X\times S$ to $S$ 
defined as the tuple 
of the functions $f_{X,Y}:V_X\times S \to V_Y$ 
where $f_{X,X}$ is the projection from $V_X\times S$ to $V_X$ 
and $f_{X,Y}$ is made of the projection from $V_X\times S$ to $S$
followed by $\lookup_Y:S\to V_Y$ when $Y\ne X$.

The logic we get, and its model, are essentially the same as in 
\cite{DDFR12-state}: thus, the pattern for a comonad 
in Section~\ref{sec:patterns} can be seen as 
a generalization to arbitrary comonads of the approach in \cite{DDFR12-state}. 

Since we have assumed that the category $\cC$ is \emph{distributive} 
we get new decorations for the rule patterns for coproducts:
the copair of two modifiers now exists, the corresponding decorated rules 
are given in Figure~\ref{fig:state-coprod}. 

The interpretation of the modifier $\copair{f_1|f_2}$,
when both $f_1$ and $f_2$ are modifiers, 
is the composition of $\copair{f_1|f_2}\colon 
(A_1\times S)+(A_2\times S)\to B\times S$ with the inverse of 
the canonical morphism $(A_1\times S)+(A_2\times S)\to (A_1+A_2)\times S$:
this inverse exists because  $\cC$ is distributive.

\begin{figure}[htbp]   
\renewcommand{\arraystretch}{2}
$$ \begin{array}{|ll|} 
\hline
\rul{copair} & 
\dfrac{ f_1^\modi\colon  A_1 \to B \quad f_2^\modi\colon  A_2 \to B}
  {\copair{f_1|f_2}^\modi\colon  A_1\!+\!A_2 \to B } \\
\rul{copair-eq} & 
\dfrac{ f_1^\modi\colon  A_1 \to B \squad f_2^\modi\colon  A_2 \to B}
  {\copair{f_1|f_2} \circ \copr_1 \eqs  f_1 \quad 
   \copair{f_1|f_2} \circ \copr_2 \eqs  f_2  } \\
\rul{copair-u} & 
\dfrac{f_i^\modi\scolon  A_i \sto B \squad 
  g^\modi\scolon  A_1\!\splus\!A_2 \sto B \squad 
  g\circ \copr_i \eqs f_i }
  {g^\modi \eqs \copair{f_1|f_2}^\modi } \\
\hline
\end{array} $$
\renewcommand{\arraystretch}{1}
\caption{$\Log_\sta$: additional rules for coproducts, 
when $\cC$ is distributive} 
\label{fig:state-coprod} 
\end{figure}

\subsection{States: conditionals and binary operations}\

To conclude with states, let us look at the 
constructions for conditionals and binary operations
in the language for states. 

The rules in Figure~\ref{fig:state-coprod} 
provide conditionals.

There is no binary product of modifiers, 
but there is a left product of a constructor and a modifier
and a right product of a modifier and a constructor.
It follows that the \emph{left and right sequential products} 
of two modifiers $f_1$ and $f_2$ 
can be defined, as in \cite{DDR11-seqprod},
by composing, e.g., the left product of an identity and $f_1$ 
with the right product of $f_2$ and an identity.

A major feature of this approach is that, for states,  
sequential products are defined without any new ingredient: 
no kind of strength, in contrast with the approach using 
the strong monad of states $(A\times S)^S$ \cite{Moggi91},
no ``external'' decoration for equations, in contrast with 
\cite{DDR11-seqprod}. 
This property is due to the introduction of the intermediate notion 
of \emph{accessors} 
between pure terms (or \emph{values}) and modifiers (or \emph{computations}). 

\subsection{Hilbert-Post completeness}\

Now we use the decorated logic $\Log_\sta$ for proving that 
the decorated theory for states is Hilbert-Post complete. 
This result is proved in \cite[Prop.2.40]{Pretnar10} 
in the framework of Lawvere theories. 
Here we give a proof in the decorated logic for states. 
This proof has been checked in Coq\footnote{Effect categories and COQ,
\url{http://coqeffects.forge.imag.fr}}.

The logic we use is the fragment $\Log_{\sta,0}$ of $\Log_\sta$ which 
involves neither products nor coproducts nor the empty type
(but which involves the unit type). 
The \emph{theory of state}, denoted $\Tt_\sta$, 
is the family of equations which may be derived 
from the axioms of $\Log_{\sta,0}$ using the rules of $\Log_{\sta,0}$.
More generally, a \emph{theory} $\Tt$ with respect to $\Log_{\sta,0}$ 
is a family of equations between terms of $\Log_{\sta,0}$ 
which is saturated with respect to the rules of $\Log_{\sta,0}$. 
A theory $\Tt'$ is an \emph{extension} of a theory $\Tt$ 
if it contains all the equations of $\Tt$. 
Two families of equations are called \emph{equivalent} if 
each one can be derived from the other with the rules of $\Log_{\sta,0}$.

As in \cite[Prop.2.40]{Pretnar10}, 
for the sake of simplicity it is assumed
that there is a single location $X$,
and we write $V$, $\lookup$ and $\update$ instead of 
$V_X$, $\lookup_X$ and $\update_X$.
Then there is a single axiom 
$  \lookup \circ \update \eqw \id_V $.

In addition, it is assumed that all types are \emph{inhabited}, 
in the sense that 
for each type $X$ there exists a closed pure term with type $X$.

\begin{theorem}\label{theo:equations}
Every equation between terms of $\Log_{\sta,0}$ 
is equivalent to four equations between pure terms.
\end{theorem}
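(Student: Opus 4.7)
The plan is to reduce the equation in two stages, each eliminating one decoration layer and yielding two pure equations. The first stage uses the rules (effect) and (local-global) to replace the decorated equation by (at most) two weak equations between modifiers. The second stage exploits the very restricted shape of terms in the product-free, single-location logic $\Log_{\sta,0}$ to split each weak equation into two pure equations, for a total of four.

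For the first stage, when the equation is strong ($f \eqs g\colon A \to B$), the rule (effect) gives $f \eqs g$ iff $f \eqw g$ and $\pa_B \circ f \eqs \pa_B \circ g$, and (local-global), applied with the single location, rewrites the second clause as the weak equation $\lookup \circ \pa_B \circ f \eqw \lookup \circ \pa_B \circ g$ of type $A \to V$. We are thus reduced to two weak equations, (W1) of type $A \to B$ and (W2) of type $A \to V$ (if the original equation is already weak, (W2) can be taken trivial). For the second stage, the decisive structural observation is that in $\Log_{\sta,0}$, with no products and a single location, every accessor $A \to X$ is forced into exactly one of two shapes, $p^\pure\colon A \to X$ or $q^\pure \circ \lookup \circ \pa_A$ with pure $q\colon V \to X$: the syntax simply provides no way to pair input with state. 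A routine induction on term shape then shows that every modifier $h\colon A \to X$ is weakly equivalent to a ``return accessor'' of one of these two shapes, so a weak equation $h_1 \eqw h_2$ is equivalent to the strong accessor equation between their return accessors. Using the inhabitedness hypothesis, fix pure constants $a_0\colon \unit \to A$ and $v_0\colon \unit \to V$; substituting $v_0 \circ \pa_A$ for $\lookup \circ \pa_A$ and post-composing with $a_0 \circ \pa_V$ yields two pure equations, of types $A \to X$ and $V \to X$, equivalent to the accessor equation. In the same-shape subcases one of these two pure equations is tautological; in the mixed subcase the two pure equations together force each side to be a constant, both with the common value $p \circ a_0 = q \circ v_0$.

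Applying this procedure in parallel to (W1) and (W2), we obtain four pure equations, respectively of types $A \to B$, $V \to B$, $A \to V$ and $V \to V$, which together are equivalent to the original equation. The main obstacle is the structural lemma of the second stage together with its backward direction: the normalisation of every modifier to a return accessor of one of the two shapes must correctly handle arbitrary interleavings of $\lookup$ and $\update$, collapsing them via the axiom $\lookup \circ \update \eqw \id_V$ and the finality of $\unit$; and the mixed-shape case needs inhabitedness used precisely to recover the accessor equation from its two pure projections through the constancy argument.
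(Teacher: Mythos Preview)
Your approach is correct and essentially the same as the paper's: a two-stage reduction (modifier equation $\to$ two accessor equations $\to$ four pure equations) driven by the rules \rul{effect} and \rul{local-global}, the canonical forms for accessors and modifiers in the product-free single-location fragment, and the inhabitedness hypothesis. The only cosmetic difference is that the paper phrases the first stage as going directly from a strong modifier equation to two accessor equations (using the canonical form $\update\circ a$ for modifiers into $\unit$), whereas you route through two weak modifier equations first and then pass to accessors via return-accessor normal forms; the underlying argument and the resulting pure equations coincide.
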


\begin{proof}
The proof is obtained by merging the two parts of 
Proposition~\ref{prop:equations}, 
which is proved in Appendix~\ref{app-complete}.
\end{proof}

\begin{proposition}\
\label{prop:equations}
\begin{enumerate}
\item \label{prop:equations-acc} 
Every equation between accessors is equivalent 
to two equations between pure terms. 
\item \label{prop:equations-modi}
Every equation between modifiers is equivalent 
to two equations between accessors. 
\end{enumerate}
\end{proposition}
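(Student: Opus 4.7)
My plan is to prove the two parts of Proposition~\ref{prop:equations} separately, first reducing a modifier equation to two accessor equations, and then an accessor equation to two pure equations; Theorem~\ref{theo:equations} follows by composition.

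For Part~\ref{prop:equations-modi}, the strategy is to apply the \rul{effect} rule to a strong equation $f \eqs g$ between modifiers $f, g \colon A \to B$. This splits the equation into the conjunction of the weak equation $f \eqw g$ and the strong equation $\pa_B \circ f \eqs \pa_B \circ g$ between the state-change modifiers with target $\unit$. Since we assume a single location, the \rul{local-global} rule then further reduces this second conjunct to the weak equation $\lookup \circ \pa_B \circ f \eqw \lookup \circ \pa_B \circ g$ between modifiers with target $V$. Thus $f \eqs g$ becomes a conjunction of two weak equations between modifiers, each capturing precisely the return-value content (at the original target $B$ and at the state-lookup target $V$, respectively). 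To finish the reduction, I would convert each weak modifier equation into a genuine equation between accessor terms: semantically, a weak equation $h_1 \eqw h_2 \colon A \to B$ between modifiers is exactly the equation $\eps_B \circ h_1 = \eps_B \circ h_2 \colon A \times S \to B$ between the underlying accessors, and syntactically this is justified by extracting the return-value accessor from a normal form for each modifier in $\Log_{\sta,0}$, with the state axiom $\lookup \circ \update \eqw \id_V$ doing the work of collapsing unnecessary $\update$-$\lookup$ interactions.

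For Part~\ref{prop:equations-acc}, the key step is a normal-form lemma: every accessor $f \colon A \to B$ in $\Log_{\sta,0}$ is strongly equivalent either to a pure term $p_f \colon A \to B$, or to a composite $p_f \circ \lookup \circ \pa_A$ for some pure $p_f \colon V \to B$. The proof is by structural induction on term formation, using the absence of products and coproducts in the fragment (which prevents branching or non-linear uses of $\lookup$), the fact that $\update$ cannot occur in an accessor, and the rule \rul{final-u} applied to accessors into $\unit$ (so that the only way $\lookup$ can enter is as $\lookup \circ \pa_A$). Given two accessors in normal form, an equation $f \eqw g$ (equivalent to $f \eqs g$ on accessors by the conversion rules) splits into pure equations by case analysis on the pair of forms: if both are of the same form, a single pure equation suffices, padded by a trivial second equation to standardize the count at two; if the two forms differ, inhabitedness supplies closed pure witnesses $a_0 \colon \unit \to A$ and $v_0 \colon \unit \to V$ against which one verifies that the two pure parts are both constant and take a common value, yielding exactly two pure equations.

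The main obstacle I expect is developing the normal-form lemmas, in particular for accessors in Part~\ref{prop:equations-acc}: the absence of products in $\Log_{\sta,0}$ forces a purely syntactic classification, and one must check case by case that every composition of the primitive accessor-constructors collapses to one of the two canonical shapes. Once the normal forms are available, the decompositions reduce to routine calculations with the inference rules of $\Log_{\sta,0}$, and summing the counts (two accessor equations per modifier equation, two pure equations per accessor equation) yields the four pure equations of Theorem~\ref{theo:equations}.
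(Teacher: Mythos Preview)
Your proposal is correct and follows essentially the same route as the paper: the paper also proves a canonical form for accessors (every non-pure accessor is $\eqs v\circ\lookup\circ\pa_X$) and for modifiers (every non-accessor modifier is $\eqs u\circ\lookup\circ\update\circ a$, hence $\eqw u\circ a$), then uses \rul{effect} to split a modifier equation into a weak part and a $\pa$-part, and handles the accessor case by the same three-way case split you describe, invoking inhabitedness in the mixed case. The only cosmetic difference is that for the $\pa$-part the paper writes $\pa_B\circ f_i \eqs \update\circ a_i$ directly from the canonical form and reduces to $a_1\eqs a_2$, whereas you first pass through \rul{local-global} to get a second weak equation before normalizing; these are the same argument, since \rul{local-global} with one location is exactly the lemma the paper uses to justify that reduction.
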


Roughly speaking, 
a theory (with respect to some logic) is said \emph{syntactically complete}
if no unprovable axiom can be added to the theory 
without introducing an inconsistency. 
More precisely, 
a theory with respect to the equational logic  
is \emph{Hilbert-Post complete} if it is consistent and has no 
consistent proper extension \cite[Definition~2.8.]{Pretnar10}.
Since we use a decorated version of the equational logic, 
we have to define a decorated version of Hilbert-Post completeness.

\begin{definition}
\label{defi:complete}
With respect to the logic $\Log_{\sta,0}$,  

a theory $\Tt$ is \emph{consistent} if there is an equation 
which is not in $\Tt$. 

An extension $\Tt'$ of a theory $\Tt$ is a \emph{pure extension} 
if it is generated by $\Tt$ and by equations between pure terms.
It is a \emph{proper extension} if it is not a pure extension.

A theory $\Tt$ is \emph{Hilbert-Post complete} if it is consistent and 
has no consistent proper extension. 
\end{definition}

The proof of Theorem~\ref{theo:complete} 
relies on Theorem~\ref{theo:equations}. 
We do not have to assume that the interpretation of the type $V$
is a countable set.
We have assumed that $\Loc$ is a singleton, but 
we conjecture that our result can be generalized to any 
set of locations, without any finiteness condition. 

\begin{theorem}
\label{theo:complete}
The theory for state is Hilbert-Post complete.
\end{theorem}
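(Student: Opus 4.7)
The plan is to deduce Hilbert-Post completeness as a near-immediate consequence of Theorem~\ref{theo:equations}. By Definition~\ref{defi:complete}, I must verify two things about $\Tt_\sta$: that it is consistent, and that it has no consistent proper extension. I will in fact establish the stronger statement that it has no proper extension at all, so the second condition will hold vacuously.

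For consistency, I would exhibit the standard set-theoretic model, namely $\cC = \cSet$ with the state object $S$ interpreted as $V$ where $V$ is chosen with at least two elements. This is a model of $\Log_{\sta,0}$, and it satisfies the weak axiom $\lookup \circ \update \eqw \id_V$. However, its strong analogue $\lookup \circ \update \eqs \id_V$ fails in this model, since as modifiers the left-hand side interprets as $(v,s) \mapsto (v,v)$ while the right-hand side interprets as $(v,s) \mapsto (v,s)$, and these differ once $|V|\geq 2$. Consequently this equation is not in $\Tt_\sta$, so $\Tt_\sta$ is consistent.

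For the main step, let $\Tt$ be any extension of $\Tt_\sta$, obtained by adjoining some family $E$ of equations. Applying Theorem~\ref{theo:equations} to each $e \in E$, I obtain a family $E_{\mathrm{pure}}$ of equations between pure terms such that each $e \in E$ is equivalent, in the sense of $\Log_{\sta,0}$, to its corresponding four-element subfamily of $E_{\mathrm{pure}}$. Since two equivalent families generate the same theory under the rules of $\Log_{\sta,0}$, the theory generated by $\Tt_\sta \cup E$ coincides with the theory generated by $\Tt_\sta \cup E_{\mathrm{pure}}$. The latter is, by Definition~\ref{defi:complete}, a pure extension of $\Tt_\sta$. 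Hence $\Tt$ is a pure extension, and so no proper extension of $\Tt_\sta$ exists; combining with consistency yields Hilbert-Post completeness.

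The proof is therefore essentially a bookkeeping step on top of Theorem~\ref{theo:equations}: the real work was done in normalising every equation to four pure ones, and the only mild care needed here is in observing that the notion of "pure extension" is closed under the equivalence supplied by that theorem. There is no genuine obstacle once Theorem~\ref{theo:equations} is in hand; the consistency check is routine, and the no-proper-extension argument is a one-line application of the normal form.
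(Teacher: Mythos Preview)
Your proof is correct and follows essentially the same approach as the paper: both arguments show that every extension of $\Tt_\sta$ is pure by invoking Theorem~\ref{theo:equations} to replace arbitrary equations by equivalent families of pure equations. The only cosmetic differences are that the paper uses $\update \eqs \pa_V$ as its consistency witness (stated without the explicit model you supply), and it phrases the main step by defining $\Tt_\pure$ as the theory generated by $\Tt_\sta$ together with the pure equations already lying in $\Tt$, then showing $\Tt_\pure = \Tt$; your version working with a generating family $E$ and its pure replacement $E_{\mathrm{pure}}$ is an equivalent repackaging of the same argument.
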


\begin{proof} 
The theory $\Tt_\sta$ is consistent:
it cannot be proved that $\update^\modi \eqs \pa_V^\pure$.

Let us consider an extension $\Tt$ of $\Tt_\sta$
and let $\Tt_\pure$ be the theory generated by $\Tt_\sta$ and 
by the equations between pure terms in $\Tt$.
Thus, $\Tt_\pure$ is a pure extension of $\Tt_\sta$ 
and $\Tt$ is an extension of $\Tt_\pure$.
Let us consider an arbitrary equation $e$ in $\Tt$,
according to Theorem~\ref{theo:equations}
we get a family $E$ of equations between pure terms 
which is equivalent to the given equation $e$.

Since $e$ is in $\Tt$ and $\Tt$ is saturated, 
the equations in $E$ are also in $\Tt$, hence they are in $\Tt_\pure$.

Since $E$ is in $\Tt_\pure$ and $\Tt_\pure$ is saturated, 
the equation $e$ is also in $\Tt_\pure$.

This proves that $\Tt_\pure=\Tt$,
so that the theory $\Tt_\sta$ has no proper extension. 
\end{proof}

\section{Exceptions: an instance of the pattern for monads}
\label{sec:exceptions}

\subsection{The core language for exceptions}\

Let us consider a bicartesian category $\cC$ 
with monomorphic coprojections 
and with a distinguished object $E$ called the \emph{object of exceptions}.
We do not assume that $\cC$ is distributive (it would not help)
nor codistributive, because usually this is not the case.
The \emph{monad of exceptions} on $\cC$ is the monad $(\M,\eta,\mu)$  
with endofunctor $\M A=A+E$, its unit $\eta$ is made of the coprojections 
$\eta_A\colon A\to A+E$, and its multiplication $\mu$ ``merges'' the exceptions,
in the sense that $\mu_A=\copair{\id_{A+E}|\copr_A}\colon (A+E)+E\to A+E$ where 
$\copr_A\colon E\to A+E$ is the coprojection. 
It satisfies the mono requirement because 
the coprojections are monomorphisms.
Thus, the category $\cC$ with the monad of exceptions 
provides a model of the logic $\Log_\mon$. 
The name of the decorations is adapted to the monad of exceptions: 
a constructor is called a \emph{propagator}: 
it may raise an exception but cannot recover from an exception, 
so that it has to propagate all exceptions;
a modifier is called a \emph{catcher}.

For this specific monad $\M A=A+E$, 
it is possible to extend the logic $\Log_\mon$ as $\Log_\exc$,
called the \emph{logic for exceptions}, 
so that $\cC$ with $\M A=A+E$ can be extended as a model 
$\cC_\exc$ of $\Log_\exc$.

First, dually to the left and right pairs for states in 
Figure~\ref{fig:state-prod}, we get 
new decorations to the rule patterns for copairs in $\Log_\mon$, 
involving modifiers, as in Figure~\ref{fig:exc-coprod}
for the left copairs (the rules for the right copairs are omitted).

The interpretation of the left copair $\lcopair{f_1|f_2}^\modi:
A_1+A_2\to B$ is the copair $\copair{f_1|f_2}:A_1+A_2+E\to B+E$ 
of $f_1:A_1\to B+E$ and $f_2:A_2+E\to B+E$ in~$\cC$.

For instance, the coproduct of $A \cong A+\empt$, 
with coprojections $\id_A^\pure:A\to A$ and $\copa_A^\pure:\empt\to A$, 
gives rise to the left copair $\lcopair{f_1|f_2}^\modi:A \to B$ 
of any constructor $f_1^\cst\colon  A \to B$
with any modifier $f_2^\modi\colon \empt \to B$,
which is characterized up to strong equations
by $\lcopair{f_1|f_2}\eqw f_1$ and $\lcopair{f_1|f_2}\eqs f_2$.
This will be used in the construction of the $\try/\catch$ expressions. 

Moreover, the rule \rul{effect} expresses the fact that,
when $\M A=A+E$, two modifiers coincide as soon as they coincide 
on ordinary values and on exceptions. 

\begin{figure}[htbp]   
\renewcommand{\arraystretch}{2}
$$ \begin{array}{|ll|} 
\hline
\rul{l-copair} &
\dfrac{ f_1^\cst\colon  A_1 \to B \quad f_2^\modi\colon  A_2 \to B}
  {\lcopair{f_1|f_2}^\modi\colon  A_1\!+\!A_2 \to B }  \\
\rul{l-copair-eq} &
\dfrac{ f_1^\cst\colon  A_1 \to B \squad f_2^\modi\colon  A_2 \to B}
  {\lcopair{f_1|f_2}^\modi \circ \copr_1^\pure \eqw  f_1^\cst \quad 
   \lcopair{f_1|f_2}^\modi \circ \copr_2^\pure \eqs  f_2^\modi  } \\
\rul{l-copair-u} &
\dfrac{g^\modi\scolon  A_1\!+\!A_2 \sto B \squad 
  f_1^\cst\scolon  A_1 \sto B \squad f_2^\modi\scolon A_2 \sto B \squad 
  g\circ \copr_1 \eqw f_1 \squad g\circ \copr_2 \eqs f_2}
  {g^\modi \eqs \lcopair{f_1|f_2}^\modi}    \\
\hline
\rul{effect} &
\dfrac{f,g\colon A \to B \quad f\eqw g \quad 
  f\circ \copa_A \eqs g\circ \copa_A}
  {f\eqs g}  \\ 
\hline
\end{array} $$
\renewcommand{\arraystretch}{1}
\caption{$\Log_\exc$: additional rules for coproducts} 
\label{fig:exc-coprod} 
\end{figure}

For each set $\Ename$ of \emph{exception names}, 
additional grammar and rules for the logic $\Log_\exc$
are given in Figure~\ref{fig:exc-tag}.
We extend the grammar of $\Log_\mon$ 
with a type $V_T$, a propagator $\tagg_T^\acc:V_T\to\empt$
and a catcher $\untag_T^\modi:\empt\to V_T$ for each exception name $T$,
and we also extend its rules. 

The logic $\Log_\exc$ obtained performs the \emph{core} operations 
on exceptions: 
the \emph{tagging} operations 
encapsulate an ordinary value into an exception, 
and the \emph{untagging} operations 
recover the ordinary value which has been encapsulated in an exception.

This may be generalized by assuming a hierarchy of exception names 
\cite{DDR13-exc}. 

The rule \rul{local-global} asserts that two functions without argument 
coincide as soon as they coincide on each exception.  
Together with the rule \rul{effect} it implies that two functions 
coincide as soon as they coincide on their argument and on each exception.

\begin{figure}[htbp]   
\renewcommand{\arraystretch}{2}
$$ \begin{array}{|ll|} 
\hline
  \textrm{Types: } & t::=\; V_T  \;\; \mbox{ for each } T\in\Ename \\ 
  \textrm{Terms: } & f::=\; \tagg_T \mid \untag_T  
  \;\; \mbox{ for each } T\in\Ename \\ 
\hline
\rul{tag} & 
\dfrac{T\in \Ename}{\tagg_T^\cst\colon V_T\to \empt}  \\ 
\rul{untag} & 
\dfrac{T\in \Ename}{\untag_T^\modi\colon \empt\to V_T} \\ 
\rul{untag-tag} & 
\dfrac{T\in \Ename}{\untag_T\circ \tagg_T \eqw \id_{V_T}} 
\qquad
\dfrac{T,R\in \Ename\; T\ne R}
  {\untag_T\circ\tagg_R\eqw\copa_{V_T}\circ\tagg_R }  \\ 
\rul{local-global} & 
\dfrac{ f,g\colon \empt\to B \quad \mbox{for all }T\in \Ename\; 
  f\circ \tagg_T \eqw g\circ \tagg_T}
  {f\eqs g} \\
\hline
\end{array} $$
\renewcommand{\arraystretch}{1}
\caption{ $\Log_\exc$: additional grammar and rules for exceptions} 
\label{fig:exc-tag} 
\end{figure}

For each family of objects $(V_T)_{T\in\Ename}$ in $\cC$  
such that $E \cong \sum_{T\in\Ename}V_T $ 
we build a model $\cC_\exc$ of $\Log_\exc$,
which extends the model the model $\cC_\M$ of $\Log_\mon$
with functions for tagging and untagging the exceptions. 

The types $V_T$ are interpreted as the objects $V_T$ and  
the propagators $\tagg_T^\cst:V_T\to\empt$ as the coprojections 
from $V_T$ to $E$. Then the interpretation of each catcher 
$\untag_T^\modi:\empt\to V_T$ is the function $\untag_T:E\to V_T+E$ 
defined as the cotuple (or case distinction)
of the functions $f_{T,R}:V_R\to V_T+E$  
where $f_{T,T}$ is the coprojection of $V_T$ in $V_T+E$
and $f_{T,R}$ is made of $\tagg_R:V_T\to E$ followed by
the coprojection of $E$ in $V_T+E$ when $R\ne T$.

This can be illustrated, in an informal way, 
as follows: $\tagg_T$ encloses its argument $a$ in a box 
with name $T$, while $\untag_T$ opens every box with name $T$ to recover 
its argument 
and returns every box with name $R \ne T$ without opening it:
 $$ \xymatrix@C=3pc@R=.5pc{
& a \ar[rr]^{\tagg_T} && *+[F-]{a} \ar@{}[r]_(.2){T} &&
*+[F-]{a} \ar@{}[r]_(.2){T} \ar[rr]^{\untag_T} && a \\ 
&&&&& *+[F-]{a} \ar@{}[r]_(.2){R} \ar[rr]^{\untag_T} && 
  *+[F-]{a} \ar@{}[r]_(.2){R} & \\
}$$

Since we did not assume that the category $\cC$ is codistributive 
we cannot get products of modifiers 
in a way dual to the coproducts of modifiers for states.

However these rules have not been used for proving the
Hilbert-Post completeness of the theory for state. Thus 
by duality from Theorem~\ref{theo:complete} we get ``for free'' 
a result about the core language for exceptions.

\begin{corollary}
\label{coro:complete}
The core theory for exceptions is Hilbert-Post complete.
\end{corollary}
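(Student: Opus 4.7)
The plan is to transport the proof of Theorem~\ref{theo:complete} across the syntactic duality between $\Log_\sta$ and $\Log_\exc$. The relevant fragment $\Log_{\exc,0}$ is defined dually to $\Log_{\sta,0}$: it omits products, coproducts and the unit type but retains the empty type, and is restricted to a single exception name $T$, writing $V$, $\tagg$, $\untag$ for $V_T$, $\tagg_T$, $\untag_T$. The core theory $\Tt_\exc$ is then the saturation, under the rules of $\Log_{\exc,0}$, of the single axiom $\untag\circ\tagg\eqw\id_V$, which is dual to the axiom $\lookup\circ\update\eqw\id_V$ used for states.

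First I would verify rule-by-rule that $\Log_{\sta,0}$ and $\Log_{\exc,0}$ are dual as inference systems. Pure terms correspond to pure terms, accessors to propagators, modifiers to catchers; strong and weak equations are preserved; sources and targets are exchanged. The restricted rule \rul{repl} of $\Log_\sta$ (weak only when $g$ is pure) becomes the restricted rule \rul{subs} of $\Log_\exc$ (weak only when $f$ is pure), and conversely; the rule \rul{effect} for states, using $\pa_A$, dualizes to the rule \rul{effect} for exceptions, using $\copa_A$; and \rul{local-global} dualizes similarly. Crucially, $\Log_{\sta,0}$ never invokes products or coproducts of non-pure terms, so the codistributivity that $\cC$ lacks on the exception side is irrelevant, and the duality of the two fragments is complete.

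With that in place, Proposition~\ref{prop:equations} and hence Theorem~\ref{theo:equations} dualize mechanically, yielding that every equation in $\Log_{\exc,0}$ is equivalent to four equations between pure terms (via: every equation between catchers is equivalent to two equations between propagators, each of which is equivalent to two equations between pure terms). The proof of Theorem~\ref{theo:complete} then runs verbatim: consistency of $\Tt_\exc$ follows because one cannot derive $\untag^\modi \eqs \copa_V^\pure$, dually to the non-derivability of $\update^\modi \eqs \pa_V^\pure$; and for any extension $\Tt$ of $\Tt_\exc$, each equation of $\Tt$ is equivalent to a finite family of pure equations which must themselves lie in $\Tt$, so $\Tt$ coincides with the pure extension it generates, forcing $\Tt_\exc$ to have no consistent proper extension. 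The only substantive obstacle is the rule-by-rule duality check, and in particular confirming that the appendix proof of Proposition~\ref{prop:equations} uses no feature of $\Log_{\sta,0}$ without a dual counterpart in $\Log_{\exc,0}$; once that is done, every remaining step is a direct translation of the state proof.
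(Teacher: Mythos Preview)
Your proposal is correct and takes essentially the same approach as the paper: the paper's entire argument for this corollary is the remark that the rules used in proving Theorem~\ref{theo:complete} do not include the (co)distributivity-dependent ones, so the result follows ``for free'' by duality. You have simply made explicit the rule-by-rule duality check and the transport of Proposition~\ref{prop:equations} that the paper leaves implicit.
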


\subsection{The programmer's language for exceptions}\

We have obtained a logic $\Log_\exc$ for exceptions, 
with the core operations for tagging and untagging. 
This logic provides a direct access to 
catchers (the untagging functions), which is not provided by 
the usual mechanism of exceptions in programming languages. 
In fact the core operations remain \emph{private}, while 
there is a \emph{programmer's} language, 
which is \emph{public}, with no direct access to the catchers.

The programmer's language for exceptions
provides the operations for \emph{raising} and \emph{handling} exceptions,
which are defined in terms of the core operations. 

This language has no catcher: 
the only way to catch an exception is by using a $\try/\catch$ expression, 
which itself propagates exceptions. 
Thus, all terms of the programmer's language are propagators. 
This language does not include the private tagging and untagging operations,
but the public $\throw$ and $\try/\catch$ constructions,
which are defined in terms of $\tagg$ and $\untag$. 
For the sake of simplicity we assume that only one type of exception is
handled in a $\try/\catch$ expression,
the general case is treated in \cite{DDR13-exc}. 

The main ingredients for building the programmer's language 
from the core language are the coproducts $A\cong A+\empt$ 
and a new conversion rule for terms. 
The \emph{downcast} conversion of a catcher to a propagator 
could have been defined in Section~\ref{sec:patterns}
for the logic $\Log_\comon$,
and dually for the logic $\Log_\mon$; the rule is:  
  $$ \dfrac{f^\modi\colon A\to B}{(\downcast f)^\cst\colon A\to B} $$
This downcasting conversion from catchers to propagators   
is interpreted by mapping $f\colon \M A\to \M B$ 
to $\downcast f=f\circ\eta_A\colon A\to \M B$.
It is related to weak equations: 
$f \eqw \downcast f$, and 
$f \eqw g$  if and only if $ \downcast f\eqs \downcast g$. 
But the downcasting conversion is \emph{unsafe}:
several catchers may be downcasted to the same propagator. 
This powerful operation turns an effectful term to an effect-free one;  
since it is not required for states nor for the core language for exceptions, 
we did not introduce it earlier.

\begin{definition}
\label{defi:exc}
For each type $B$ and each exception name $T$, the propagator
$ \throw_{B,T}^\cst$ is: 
 $$ \throw_{B,T}^\cst = \copa_B^\pure \circ \tagg_T^\cst \colon V_T\to B $$
For each each propagator $f^\cst\colon A\to B$, each exception name $T$ and 
each propagator $g^\cst\colon V_T\to B$, 
the propagator $\try(f)\catch(T\To g)^\cst$ is defined in three steps, 
involving two catchers 
$\catch(T\To g)^\modi $ and $\TRY(f)\catch(T\To g)^\modi$,
as follows: 
\renewcommand{\arraystretch}{1.3}
$$ \begin{array}{l}
\catch(T\To g)^\modi = 
  \copair{\; g^\cst \;|\; \copa_B^\pure \;}^\cst \circ\untag_T^\modi 
  \colon  \empt\to B \\ 
\TRY(f)\catch(T\To g)^\modi = 
  \lcopair{\; \id_B \;|\; \catch(T\To g) \;}^\modi \circ f^\cst 
  \colon A\to B \\
\try(f)\catch(T\To g)^\cst = 
  \downcast(\TRY(f)\catch(T\To g)) 
  \colon A\to B \\
\end{array}$$
\renewcommand{\arraystretch}{1}
\end{definition}

This means that raising an exception with name $T$
consists in tagging the given ordinary value (in $V_T$) 
as an exception and coerce it to any given type $B$. 

For handling an exception, 
the intermediate expressions $\catch(T\To g)$ 
and $\TRY(f)\catch(T\To g)$ are private catchers 
and the expression $\try(f)\catch(T\To g)$ is a public propagator: 
the downcast operator prevents it from catching exceptions 
with name $T$ which might have been raised before 
the $\try(f)\catch(T\To g)$ expression is considered. 

The definition of $\try(f)\catch(T\To g)$ corresponds to 
the Java mechanims for exceptions \cite{java,Jacobs01}.

The definition of $\try(f)\catch(T\To g)$ corresponds to the following 
control flow,
where \texttt{exc?} means ``\emph{is this value an exception?}'',
an \emph{abrupt} termination returns an uncaught exception 
and a \emph{normal} termination returns an ordinary value;
this corresponds, for instance, to the Java mechanims for exceptions
\cite{java,Jacobs01}.

$$ \xymatrix@C=.8pc@R=.5pc{
& \ar[d] && \\
& \isexc \ar[ld]_{Y}\ar[rd]^{N} && \\
\abr && f^\cst \ar[d] & \\
&& \isexc \ar[ld]_{Y}\ar[rd]^{N} & \\
& \untag_T^\modi \ar[d] && \nor \\
& \isexc \ar[ld]_{Y}\ar[rd]^{N} && \\ 
\abr && g^\cst \ar[d] & \\
&& \txt{\nor \mbox{ or } \abr} \\
} $$

\subsection{Exceptions: case distinction and binary operations}\

To conclude with exceptions, let us look at the 
constructions for case distinction and binary operations
in the programmer's language for exceptions,
which means, copairs and pairs of \emph{constructors}. 

The general rules of the logic $\Log_\mon$ include 
coproducts of constructors (Figure~\ref{fig:pattern-rules}), 
which provide case distinction for all terms in the programmer's 
language for exceptions.

But the general rules for a monad do not include binary products 
involving a constructor,
hence they cannot be used for dealing with binary operations 
in the programmer's language for exceptions 
when at least an argument is not pure.
Indeed, if $f_1^\pure\colon A\to B_1$ is pure 
and $f_2^\cst\colon A\to B_2$ does raise an exception, 
it is in general impossible to find $f^\cst\colon A\to B_1\times B_2$ 
such that $\pr_1\circ f\eqs f_1$ and $\pr_2\circ f\eqs f_2$. 

However, there are several ways to formalize the fact of 
first evaluating $f_1$ then $f_2$:
for instance by using a strong monad \cite{Moggi91},  
or a sequential product \cite{DDR11-seqprod},
or productors \cite{Tate13}. 
The sequential product approach can be used in our framework; 
it requires the introduction of a third kind of ``equations'',
in addition to the strong and weak equations, 
which corresponds to the usual order between partial functions:
details are provided in \cite{DDR11-seqprod}.

\section{Conclusion}

We have presented two patterns giving sound inference systems for 
effects arising from a monad or a comonad.

We also gave detailed examples of applications of these patterns to
the state and the exceptions effects.
The obtained decorated proof system for states has been implemented in Coq, 
so that the given proofs can be automatically verified. 
We plan to adapt this logic to local states (with allocation) 
in order to provide a decorated proof of the completeness Theorem 
in \cite{Staton10-fossacs}. 

From this implementation, we plan to extract the generic
part corresponding to the comonad pattern, dualize it and extend it to
handle the programmer's language for exceptions.

Then a major issue is scalability: how can we combine effects? 
Within the framework of this paper, it may seem difficult to guess
how several effects arising from either monads or comonads
can be combined.
However, as mentioned in the Introduction,
this paper deals with two patterns for instanciating the more
general framework of decorated logics \cite{DD10-dialog}.
Decorated logics are based on spans in a relevant category of logics,
so that the combination of effects can be based on the
well-known composition of spans.

\paragraph{Acknowledgment.}
We are grateful to Samuel Mimram for enlightning discussions.

\appendix

\section{Proof of Hilbert Post completeness}
\label{app-complete}

\renewcommand{\lookup}{\mathtt{lkp}}
\renewcommand{\update}{\mathtt{upd}}

The logic used in this Appendix is the fragment $\Log_{\sta,0}$ 
of the decorated logic for states $\Log_\sta$ which 
involves neither products nor coproducts nor the empty type,
but which involves the unit type. 

For the sake of simplicity it is assumed
that there is a single location $X$,
and we write $V$, $\lookup$ and $\update$ instead of 
$V_X$, $\mathtt{lookup}_X$ and $\mathtt{update}_X$.
Then there is a single axiom
$  \lookup \circ \update \eqw \id_V $.

In Section~\ref{sec:states},
the proof of Hilbert-Post completeness in Theorem~\ref{theo:complete} 
relies on Proposition~\ref{prop:equations},
which is restated here as Proposition~\ref{prop:equations-app}.
The aim of this Appendix is to prove Proposition~\ref{prop:equations-app}.

\begin{lemma}
\label{lemm:eqns} 
The following rules can be derived: 
\begin{enumerate}
\item \label{lemm:eqns-rule-unit} 
  $ \dfrac{f^\modi,g^\modi:X\to \unit \quad 
  \lookup \circ f \eqw \lookup \circ g }{f \eqs g} $
\item \label{lemm:eqns-rule-val} 
  $ \dfrac{f^\modi,g^\modi:X\to V \quad 
  f \eqw g }{\update \circ f \eqs \update \circ g} $
\item \label{lemm:eqns-strong} 

  $ \dfrac{}{\update\circ\lookup \eqs \id_{\unit}} $
\item\label{lem:pure_mid_id}
$\dfrac{a^\acc:X\to V\quad u^\pure:V\to Y}
  {u^\pure\circ\lookup^\acc\circ\update^\modi\circ a^\acc 
   \eqw u^\pure\circ a^\acc}$
\item\label{lemm:onepureacc}
$\dfrac{x^\pure:\unit\to X}{x^\pure \eqs x^\pure\circ \pa_V^\pure\circ \lookup^\acc}$
\item\label{lemm:lkppureequal}
$\dfrac{u^\pure,w^\pure:V\to X\quad w^\pure \circ \lookup^\acc  \eqs u^\pure \circ \lookup^\acc}{w^\pure \eqs u^\pure}$ 
\item\label{lemm:lkpuseless}
$\dfrac{x^\pure:\unit\to X \quad w^\pure:V\to X \quad w^\pure \circ \lookup^\acc  \eqs x^\pure}{w^\pure \eqs x^\pure \circ \pa_V^\pure}$
\end{enumerate}
\end{lemma}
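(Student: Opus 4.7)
The proof proceeds through the seven items in the order stated, each building on the earlier ones together with the single axiom $\lookup \circ \update \eqw \id_V$ and the basic rules of $\Log_{\sta,0}$. The recurring tools are: (a) the \rul{local-global} rule with a singleton $\Loc$, which upgrades a weak equation to a strong one whenever the terms have target $\unit$; (b) the substitution rule, which is unrestricted even for weak equations; (c) the replacement rule, which for weak equations requires the outer term to be pure; and (d) the conversion rules identifying $\eqw$ and $\eqs$ on pure terms and accessors, and letting pure/accessor terms be freely upcast.

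Item \ref{lemm:eqns-rule-unit} is the immediate instance of \rul{local-global} for the singleton $\Loc = \{X\}$. Item \ref{lemm:eqns-rule-val} is reduced to item \ref{lemm:eqns-rule-unit}: since $\update \circ f$ and $\update \circ g$ both have target $\unit$, it suffices to prove $\lookup \circ \update \circ f \eqw \lookup \circ \update \circ g$; substituting the axiom on the right by $f$ and $g$ collapses both sides to $f$ and $g$, which are weakly equal by hypothesis. Item \ref{lemm:eqns-strong} is item \ref{lemm:eqns-rule-unit} applied to $\update \circ \lookup$ and $\id_\unit$; the required premise $\lookup \circ \update \circ \lookup \eqw \lookup$ is just substitution of the axiom by $\lookup$.

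Item \ref{lem:pure_mid_id} is pure bookkeeping: weak substitution of the axiom by the accessor $a$ gives $\lookup \circ \update \circ a \eqw a$, after which weak replacement by the pure term $u$, admissible precisely because $u$ is pure, delivers the claim. For item \ref{lemm:onepureacc}, final-u applied to the accessor $\pa_V^\pure \circ \lookup^\acc \colon \unit \to \unit$ yields $\pa_V \circ \lookup \eqs \pa_\unit \eqs \id_\unit$; strong replacement by $x^\pure$ and the conversion rule between pure terms and accessors then give $x \eqs x \circ \pa_V \circ \lookup$.

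Item \ref{lemm:lkppureequal} combines two applications of the axiom, using replacement with the pure terms $w$ and $u$, to obtain $w \circ \lookup \circ \update \eqw w$ and $u \circ \lookup \circ \update \eqw u$; substituting the hypothesis $w \circ \lookup \eqs u \circ \lookup$ on the right by $\update$ matches up the middle terms, so transitivity gives $w \eqw u$, and the conversion rule for pure terms upgrades this to $w \eqs u$. Finally, item \ref{lemm:lkpuseless} follows by rewriting the hypothesis via item \ref{lemm:onepureacc} as $w \circ \lookup \eqs (x \circ \pa_V) \circ \lookup$ and then invoking item \ref{lemm:lkppureequal} with the two pure terms $w$ and $x \circ \pa_V$.

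The main obstacle is not conceptual but bookkeeping: one must track the decoration of every intermediate term in order to see which congruence move is admissible at each step, since weak replacement is the only broken symmetry and it is exactly what forces the order in which the items must be established. Once this bookkeeping is in place, no derivation is longer than two or three rule applications.
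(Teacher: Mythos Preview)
Your proof is correct and follows essentially the same approach as the paper's own proof: each item is derived from the same ingredients (the single axiom, \rul{local-global}, \rul{final-u}, and the restricted weak replacement rule) in the same way. The only cosmetic difference is in item~\ref{lem:pure_mid_id}, where you substitute by $a$ first and then replace by $u$, while the paper replaces by $u$ first and then substitutes by $a$; both orders are admissible and yield the same conclusion.
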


\begin{proof} 

\begin{enumerate}
\item Consequence of the observational Rule~\rul{local-global} with only one
  location. 
\item Consequence of~\ref{lemm:eqns-rule-unit}
applied to $\update \circ f, \update \circ g: X\to\unit$: 
indeed, from the axiom $ \lookup \circ \update \eqw \id_V $ we get 
$\lookup \circ\update \circ f \eqw \lookup \circ\update \circ g$.  
\item 
From axiom $ \lookup \circ \update \eqw \id_V $
by substitution we get 
$\lookup \circ \update \circ \lookup \eqw \lookup $;
thus, point~\ref{lemm:eqns-rule-unit}   
implies $\update\circ\lookup \eqs \id_\unit$.
\item 
From $\lookup^\acc\circ\update^\modi\eqw \id_V$, as $u$ is pure, by the weak
replacement we have  
$u^\pure\circ\lookup^\acc\circ\update^\modi\eqw u^\pure$.
Then, weak substitution with $a$ yields
$u^\pure\circ\lookup^\acc\circ\update^\modi\circ a^\acc\eqw u^\pure\circ
a^\acc$.
\item We know that $ \pa_V^\pure \circ \lookup^\acc:\unit\to\unit \eqs \id_\unit $.

It follows that 
$x^\pure\circ \pa_V^\pure \circ \lookup^\acc \eqs x^\pure $.
\item 
Let $ w^\pure \circ \lookup^\acc  \eqs u^\pure \circ \lookup^\acc $.
Composing with $\update$ we get 
$ w^\pure \circ \lookup^\acc \circ \update^\modi \eqs 
u^\pure \circ \lookup^\acc \circ \update^\modi $. 
Using the axiom $ \lookup \circ \update \eqw \id_V $ and the replacement rule 
for $\eqw$, which can be used here because both $ w$ and $u$ are pure, 
we get  $ w^\pure \eqw u^\pure$. Since weak and strong equations coincide 
on pure terms we get $ w^\pure \eqs u^\pure$. 
\item 
Let $ w^\pure \circ \lookup^\acc  \eqs x^\pure $. 
By point~\ref{lemm:onepureacc} above we get 
$ x^\pure \eqs x^\pure\circ \pa_V^\pure \circ \lookup^\acc$, thus 
$ w^\pure \circ \lookup^\acc  \eqs x^\pure\circ \pa_V^\pure \circ \lookup^\acc$.
Then by point~\ref{lemm:lkppureequal} above we get 
$ w^\pure \eqs x^\pure \circ \pa_V^\pure$. 
\end{enumerate}
\end{proof}

Now, let us prove Proposition~\ref{prop:canonical-form}, 
which says that, up to strong equations, it can be assumed that 
there is at most one occurrence of $\lookup$ in any accessor 
and at most one occurrence of $\update$ in any modifier.

\begin{proposition} 
\label{prop:canonical-form} 
\begin{enumerate}
\item
For each accessor $a^\acc:X\to Y$, if $a$ is not pure then 
there is a pure term $v^\pure:V\to Y$ 
such that 
\begin{equation}\label{prop:canonical-form-acc} 
 a^\acc \eqs v^\pure\circ \lookup^\acc\circ \pa_X^\pure
\end{equation}
\item
For each modifier $f^\modi:X\to Y$, if $f$ is not an accessor then 
there is an accessor $a^\acc:X\to V$ and a pure term $u^\pure:V\to Y$ 
such that 
\begin{equation}\label{prop:canonical-form-modi} 
 f^\modi \eqs u^\pure\circ \lookup^\acc\circ \update^\modi \circ a^\acc
\end{equation}
\end{enumerate}
\end{proposition}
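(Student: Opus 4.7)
The plan is a structural induction on terms of $\Log_{\sta,0}$. Since this fragment has no products, coproducts, or empty type, every term is a linear composition of the generators $\id$, $\pa$, $\lookup$, $\update$, and pure constants; a non-pure accessor must contain at least one $\lookup$, and a non-accessor modifier must contain at least one $\update$. Throughout I will use three elementary identities: the uniqueness of the terminal arrow (rule \rul{final-u}, valid up to accessors), which gives $\pa_Z\circ h\eqs\pa_X$ for any accessor $h\colon X\to Z$; its special case $\pa_V\circ\lookup\eqs\id_\unit$; and Lemma~\ref{lemm:eqns}(\ref{lem:pure_mid_id}), namely $u^\pure\circ\lookup\circ\update\circ a^\acc\eqw u^\pure\circ a^\acc$.

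For statement (1), the base case $\lookup\eqs\id_V\circ\lookup\circ\pa_\unit$ is immediate from $\pa_\unit\eqs\id_\unit$. In the inductive step I decompose $a=g\circ h$. If one factor is pure, I apply the inductive hypothesis to the other and absorb the pure factor either into the pure prefix $v$ on the left, or into $\pa_X$ on the right via uniqueness of $\pa$. If both factors are non-pure accessors, the inductive hypothesis yields $v_2\circ\lookup\circ\pa_Z\circ v_1\circ\lookup\circ\pa_X$, and the two collapses $\pa_Z\circ v_1\eqs\pa_V$ and $\pa_V\circ\lookup\eqs\id_\unit$ leave a single $\lookup$.

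For statement (2), the base case $\update\eqs\pa_V\circ\lookup\circ\update\circ\id_V$ is again an application of $\pa_V\circ\lookup\eqs\id_\unit$. The inductive step on $f=g\circ h$ divides into four cases by which factor carries the $\update$; three of them are routine. If $h$ is an accessor, apply the inductive hypothesis to $g$ and extend its right accessor to $a_g\circ h$. If $g$ is pure, apply the inductive hypothesis to $h$ and extend its left pure term to $g\circ u_h$. If $g$ is a non-pure accessor and $h$ a non-accessor modifier, combine statement (1) applied to $g$ with the inductive hypothesis applied to $h$ and perform the same two collapses as in statement (1).

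The hard case is when both $g$ and $h$ are non-accessor modifiers. The inductive hypotheses then give
\[ f \eqs u_2\circ\lookup\circ\update\circ b\circ\lookup\circ\update\circ a_1, \]
with $b=a_2\circ u_1$ an accessor $V\to V$, and I must merge the two copies of $\lookup\circ\update$ into one. First I normalize $b$ by statement (1): if $b\eqs v\circ\lookup\circ\pa_V$ with $v$ pure, the collapse $\pa_V\circ\lookup\eqs\id_\unit$ replaces $b$ by the pure $v$; otherwise $b$ is already pure. So without loss of generality $b$ is pure, and the goal becomes
\[ u_2\circ\lookup\circ\update\circ b\circ\lookup\circ\update\circ a_1 \eqs u_2\circ\lookup\circ\update\circ(b\circ a_1), \]
with canonical data $u:=u_2$ and $a:=b\circ a_1$. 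I establish this by rule \rul{effect}. The weak half follows from two applications of Lemma~\ref{lemm:eqns}(\ref{lem:pure_mid_id}) composed via rule \rul{subs}, reducing both sides to the common weak form $u_2\circ b\circ a_1$. For the strong-after-$\pa$ half, pushing $\pa_Y$ past the pure $u_2$ and collapsing $\pa_V\circ\lookup\eqs\id_\unit$ reduces the goal to $\update\circ b\circ\lookup\circ\update\circ a_1\eqs\update\circ b\circ a_1$, which follows from Lemma~\ref{lemm:eqns}(\ref{lemm:eqns-rule-val}) applied to the weak equation $b\circ\lookup\circ\update\circ a_1\eqw b\circ a_1$ (a third use of Lemma~\ref{lemm:eqns}(\ref{lem:pure_mid_id})). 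The only real subtlety throughout is tracking which weak-to-strong conversions are available at each step.
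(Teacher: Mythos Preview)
Your proof is correct; both parts go through as outlined. The paper takes a similar but leaner route, and the difference in Part~2 is worth noting.

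For Part~1, the paper avoids a full structural induction: it observes directly that any non-pure accessor factors through its \emph{leftmost} occurrence of $\lookup$ as $a = v^\pure \circ \lookup \circ a_1^\acc$, and since $a_1\colon X\to\unit$ is an accessor, rule \rul{final-u} immediately gives $a_1 \eqs \pa_X$. Your case analysis on the composition tree reproduces the same collapse, just spread over more cases.

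For Part~2 the organisational difference is more substantive. The paper factors $f$ through its leftmost $\update$ as $f = b^\acc \circ \update \circ f_1^\modi$, reduces $b$ to $v\circ\lookup$ by Part~1, and then inducts on the number of occurrences of $\update$ in $f$. The reduction step writes $f_1 = b_1 \circ \update \circ f_2$, normalises $b_1$ to $v_1\circ\lookup$, obtains $f_1 \eqw v_1\circ f_2$ from the axiom, and then invokes \emph{only} Lemma~\ref{lemm:eqns}(\ref{lemm:eqns-rule-val}) to conclude $\update\circ f_1 \eqs \update\circ v_1\circ f_2$, dropping one $\update$. Your hard case instead appeals to the full rule \rul{effect}, establishing the weak half and the strong-after-$\pa$ half separately before merging them. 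Both arguments are valid, but the paper's shows that \rul{effect} is not needed here: the canonical form already follows from \rul{local-global} (via Lemma~\ref{lemm:eqns}(\ref{lemm:eqns-rule-val})) alone. Your route buys a cleaner structural-induction template at the cost of a heavier rule; the paper's buys a sharper dependency analysis at the cost of a slightly ad~hoc induction measure.
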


\begin{proof}
\begin{enumerate}
\item If $a^\acc:X\to Y$ is not pure then it contains 
at least one occurrence of $\lookup^\acc$. 
Thus, it can be written in a unique way as 
$ a^\acc = v^\pure\circ \lookup^\acc\circ a_1^\acc$ for some pure term 
$v^\pure:V\to Y$ and some accessor $a_1^\acc:X\to \unit$.
Since $a_1^\acc:X\to \unit$ is such that $a_1^\acc\eqs\pa_X$, 
the result follows.
\item If $f^\modi:X\to Y$ is not an accessor then it contains 
at least one occurrence of $\update^\modi$. 
Thus, it can be written in a unique way as  
$f^\modi = b^\acc \circ \update^\modi \circ f_1^\modi$ for some accessor 
$b^\acc:\unit\to Y$ and some modifier $f_1^\modi:X\to V$. From
point~\ref{prop:canonical-form-acc}, we also have that
$b^\acc \eqs v^\pure\circ \lookup^\acc\circ\pa_\unit\eqs v^\pure\circ
\lookup^\acc$  for some pure term 
$v^\pure:V\to Y$ so that $f^\modi\eqs v^\pure\circ \lookup^\acc\circ\update^\modi \circ f_1^\modi$.
  \begin{itemize}
  \item If $f_1$ is an accessor, the result follows with $a=f_1$.
  \item Otherwise, $f_1^\modi$ contains at least one occurrence 
  of $\update^\modi$. Thus, it can be written in a unique way as  
  $f_1^\modi = b_1^\acc \circ \update^\modi \circ f_2^\modi$ for some accessor 
  $b_1^\acc:\unit\to V$ and some modifier $f_2^\modi:X\to V$.
  According to point~\ref{prop:canonical-form-acc} applied to 
  the accessor $b_1$, either $b_1$ is pure or 
  $ b_1^\acc \eqs v_1^\pure\circ \lookup^\acc $ for some pure term $v_1^\pure:V\to V$
    \begin{itemize}
    \item If $ b_1^\acc \eqs v_1^\pure\circ \lookup^\acc $ then 
    $f_1 \eqs v_1 \circ \lookup \circ \update \circ f_2$. 
    The axiom $\lookup \circ \update \eqw \id_V$ and 
    the replacement and substitution rules for $\eqw$ 
    (since $v_1$ is pure) yield $f_1 \eqw v_1 \circ f_2$.
    Then it follows from point~\ref{lemm:eqns-rule-val} 
    in Lemma~\ref{lemm:eqns} that $\update \circ f_1 \eqs 
    \update \circ v_1 \circ f_2$,
    and since $f = b \circ \update \circ f_1$ we get 
    $f \eqs b \circ \update \circ v_1 \circ f_2$.
    The result follows by induction on the number of occurrences 
    of $\update$ in $f$: indeed, 
    there is one less occurrence of $\update$ in 
    $b \circ \update \circ v_1 \circ f_2$ 
    than in $f=b \circ \update \circ b_1 \circ \update \circ f_2$.
    \item If $b_1$ is pure then $b_1^\pure\eqs b_1^\pure\circ\pa_V\circ\lookup$
      from point~\ref{lemm:onepureacc} in Lemma~\ref{lemm:eqns}. 
      Thus the previous proof applies by replacing $b_1$ with $b_1\circ\pa_V$.
    \end{itemize}
  \end{itemize}
\end{enumerate}
\end{proof}

\begin{corollary} The previous forms can be simplified for accessors with
  domain~$\unit$ and for modifiers with codomain~$\unit$, as follows:
\begin{enumerate}
\item For each accessor $a^\acc:\unit\to Y$ 
there is a pure term $v^\pure:V\to Y$ 
such that $$ a^\acc \eqs v^\pure\circ \lookup^\acc $$
\item For each modifier $f^\modi:X\to \unit$  
there is an accessor $a^\acc:X\to V$ such that 
  $$ f^\modi \eqs \update^\modi \circ a^\acc $$
\end{enumerate}
\end{corollary}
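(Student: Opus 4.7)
The plan is to derive each clause by specializing Proposition~\ref{prop:canonical-form} to the given unit-type endpoint and then collapsing the extra factors with the final-object rule \rul{final-u}, together with Lemma~\ref{lemm:eqns}. A case split on whether the source term already lies ``one level down'' (pure in part~1, a bare accessor in part~2) handles the situation not covered by Proposition~\ref{prop:canonical-form}.

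For part~1, take $a^\acc\colon\unit\to Y$. If $a$ is not pure, Proposition~\ref{prop:canonical-form}(1) supplies a pure $v^\pure\colon V\to Y$ with $a\eqs v\circ\lookup\circ\pa_\unit$; but \rul{final-u} gives $\pa_\unit\eqs\id_\unit$, so $a\eqs v\circ\lookup$. If $a$ is already pure, Lemma~\ref{lemm:eqns}(\ref{lemm:onepureacc}) gives $a\eqs a\circ\pa_V\circ\lookup$ directly, and the pure composite $v=a\circ\pa_V\colon V\to Y$ witnesses the required form.

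For part~2, take $f^\modi\colon X\to\unit$. If $f$ is not an accessor, Proposition~\ref{prop:canonical-form}(2) supplies a pure $u^\pure\colon V\to\unit$ and an accessor $a^\acc\colon X\to V$ with $f\eqs u\circ\lookup\circ\update\circ a$. Since $u$ is a morphism into $\unit$, \rul{final-u} yields $u\eqs\pa_V^\pure$, and another application to the accessor $\pa_V\circ\lookup\colon\unit\to\unit$ gives $\pa_V\circ\lookup\eqs\id_\unit$, so after strong replacement we obtain $f\eqs\update\circ a$. If $f$ is already an accessor, set $a=\lookup^\acc\circ f^\acc\colon X\to V$, which is again an accessor; Lemma~\ref{lemm:eqns}(\ref{lemm:eqns-strong}) gives $\update\circ\lookup\eqs\id_\unit$, and substituting $f$ on the right yields $\update\circ a=\update\circ\lookup\circ f\eqs f$.

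The only delicate point is keeping the decorations straight: the uses of \rul{final-u} above are at the accessor (or pure) level, which is the strongest decoration at which this rule is valid in $\Log_\sta$, while the strong replacement and substitution steps impose no restriction on the composed terms. No new structural lemmas beyond Proposition~\ref{prop:canonical-form}, Lemma~\ref{lemm:eqns}, and \rul{final-u} are required, so the entire argument is essentially a decoration-aware simplification of the canonical forms.
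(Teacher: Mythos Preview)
Your proof is correct and follows essentially the same approach as the paper: a case split on whether the term is already at the lower decoration level, with the ``already lower'' case handled via $\pa_V\circ\lookup\eqs\id_\unit$ (part~1) and $\update\circ\lookup\eqs\id_\unit$ (part~2), and the other case by specializing Proposition~\ref{prop:canonical-form} and collapsing the unit factor with \rul{final-u}. The only cosmetic difference is that in the non-accessor subcase of part~2 the paper directly observes that $u\circ\lookup:\unit\to\unit$ is an accessor, hence $\eqs\id_\unit$, whereas you split this into $u\eqs\pa_V$ followed by $\pa_V\circ\lookup\eqs\id_\unit$; the content is identical.
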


\begin{proof}
\begin{enumerate}
\item \begin{itemize}
  \item If $a:\unit\to Y$ is pure, 
  since $\pa_V \circ lookup \eqs \id_\unit$ 
  (because $\pa_V \circ lookup$ is an accessor) 
  we get $ a \eqs a \circ \pa_V \circ \lookup$, 
  thus the result is obtained with $v^\pure=a \circ \pa_V$.
  \item Otherwise, we have just proved that 
  $ a \eqs v^\pure\circ \lookup\circ \pa_X^\pure $
  with $X=\unit $,
  then $\pa_X\eqs \id_\unit$ and $ a^\acc \eqs v^\pure\circ \lookup$.
  \end{itemize}

\item \begin{itemize}
  \item If $f:X\to \unit$ is an accessor,
  since $\update \circ \lookup \eqs \id_\unit$  
  we get $ f \eqs\update \circ \lookup \circ f$, 
  thus the result is obtained with $a^\acc=\lookup \circ f$. 
  \item Otherwise, we have just proved that 
  $ f \eqs b^\acc\circ \update \circ a^\acc $ with $b^\acc:\unit\to\unit $,
  then $b\eqs \id_\unit$ and $ f^\modi \eqs \update \circ a^\acc$.
  \end{itemize}

\end{enumerate}
\end{proof}

\begin{corollary}
\label{coro:downcast}
For each modifier $f^\modi:X\to Y$, if $f$ is not an accessor then 
there is an accessor $a^\acc:X\to V$ and a pure term $u^\pure:V\to Y$ 
such that $f\eqw u^\pure\circ a^\acc$.
\end{corollary}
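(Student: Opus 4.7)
The plan is to derive Corollary~\ref{coro:downcast} almost directly from the canonical form for modifiers together with the weak-identity lemma for the pattern $\lookup\circ\update$ inserted between a pure term and an accessor. First I would invoke Proposition~\ref{prop:canonical-form}(\ref{prop:canonical-form-modi}) on the given modifier $f^\modi:X\to Y$, which is by hypothesis not an accessor, to obtain an accessor $a^\acc:X\to V$ and a pure term $u^\pure:V\to Y$ satisfying the strong equation $f \eqs u\circ\lookup\circ\update\circ a$. Then I would apply the conversion rule that downgrades a strong equation between modifiers to a weak one, obtaining $f \eqw u\circ\lookup\circ\update\circ a$. Finally, Lemma~\ref{lemm:eqns}(\ref{lem:pure_mid_id}), applied to this very $u^\pure$ and $a^\acc$, yields $u\circ\lookup\circ\update\circ a \eqw u\circ a$, and transitivity of weak equations closes the argument to deliver $f \eqw u\circ a$.

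There is essentially no obstacle. Each step is licit in the comonadic fragment $\Log_{\sta,0}$: transitivity and the strong-to-weak conversion apply across all decorations, while the invoked part of Lemma~\ref{lemm:eqns} itself rests on weak replacement by the pure term $u$ together with weak substitution by the accessor $a$, both of which are valid in $\Log_\comon$. Thus the corollary is really just a repackaging of the canonical form, replacing the block $u\circ\lookup\circ\update\circ a$, whose natural decoration is $\modi$, by the weakly equivalent block $u\circ a$, whose decoration is $\acc$.
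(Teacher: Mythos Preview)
Your proof is correct and follows essentially the same approach as the paper. The paper's proof is slightly more terse, invoking the axiom $\lookup\circ\update\eqw\id_V$ and weak replacement by the pure $u$ directly rather than citing Lemma~\ref{lemm:eqns}(\ref{lem:pure_mid_id}), but this is exactly the content of that lemma, so the two arguments are the same in substance.
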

\begin{proof}
From Proposition~\ref{prop:canonical-form} we have that 
$f^\modi \eqs u^\pure\circ \lookup^\acc\circ \update^\modi \circ
a^\acc$. Using the axiom $\lookup\circ\update\eqw\id_V$ and the
replacement rule for $\eqw$, which can be used here because $u^\pure$
is pure, we get $f^\modi\eqw u^\pure\circ a^\acc$. 
\end{proof}

We can now prove Proposition~\ref{prop:equations-app} on which the
Hilbert-Post completeness theorem relies. 
This proof has been checked with the Coq
proof assistant using the system for states of~\cite{DDEP13-coq}. The
Coq library with the inference system is available there:
\url{http://coqeffects.forge.imag.fr}. The single proof of the following
proposition (roughly 16 pages in Coq) is directly available there: 
\url{http://coqeffects.forge.imag.fr/HPcompleteCoq.v}.

\begin{proposition}
\label{prop:equations-app}
Let us assume that 
for each type $X$ there exists a closed pure term $h_X^\pure:\unit\to X$.
Then: 
\begin{enumerate}
\item 
every equation between accessors is equivalent 
to one or two equations between pure terms;
\item 
every equation between modifiers is equivalent 
to one or two equations between accessors. 
\end{enumerate}
\end{proposition}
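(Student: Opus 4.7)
The plan is to use the canonical forms established in Proposition~\ref{prop:canonical-form}, together with the observational rules (\rul{local-global} and \rul{effect}), to peel off occurrences of $\lookup$ and $\update$ from both sides of a given equation. The first part of Proposition~\ref{prop:canonical-form} shows that an accessor is either pure or strongly equivalent to $v^\pure\circ\lookup^\acc\circ\pa_X^\pure$ for some pure $v$; the second part gives a similar form for modifiers, featuring $\lookup\circ\update$ sandwiched between a pure term and an accessor. These normal forms expose exactly how the state is observed or modified, which is what I want to isolate.

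For part~1 I would perform a case analysis on the normal forms of the two accessors $a,b\colon X\to Y$. If both are pure the equation is already pure, giving one equation. If both are non-pure with coefficients $v_a,v_b\colon V\to Y$, composing on the right with the closed pure inhabitant $h_X^\pure\colon\unit\to X$ eliminates the $\pa_X$ factor, and item~\ref{lemm:lkppureequal} of Lemma~\ref{lemm:eqns} reduces the equation to the single pure equation $v_a\eqs v_b$. The mixed case, with $a$ pure and $b\eqs v_b\circ\lookup\circ\pa_X$, appears to need two pure equations: composing with $h_X$ and applying item~\ref{lemm:lkpuseless} of Lemma~\ref{lemm:eqns} yields $v_b\eqs a\circ h_X\circ\pa_V$, and the auxiliary pure equation $a\eqs a\circ h_X\circ\pa_X$ (forced by the original equation, since its right-hand side visibly factors through $\unit$) is required to recover $a\eqs b$ in the backward direction.

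For part~2 I would apply the rule \rul{effect} to split $f\eqs g$ into a weak equation $f\eqw g$ and a strong equation $\pa_Y\circ f\eqs\pa_Y\circ g$ between modifiers $X\to\unit$. Corollary~\ref{coro:downcast} gives $f\eqw u_f^\pure\circ a_f^\acc$ (and similarly for $g$), so the weak equation becomes the strong accessor equation $u_f\circ a_f\eqs u_g\circ a_g$, invoking the coincidence of weak and strong on accessors. For the second piece, every modifier $X\to\unit$ is strongly equivalent to $\update\circ a'$ for some accessor $a'$ (a specialisation of Proposition~\ref{prop:canonical-form}), and the axiom $\lookup\circ\update\eqw\id_V$, composed on the left and used together with the conversion from weak to strong on accessors, reduces $\update\circ a_f'\eqs\update\circ a_g'$ to the accessor equation $a_f'\eqs a_g'$. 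This yields two accessor equations in general, collapsing to one when both modifiers happen to be accessors.

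The most delicate step is the mixed case in part~1: extracting enough information from the single equation $a\eqs v_b\circ\lookup\circ\pa_X$ genuinely requires two pure equations, and the backward direction depends on the derived identity $\pa_V\circ\lookup\eqs\id_\unit$. Throughout, I must be careful with the decoration bookkeeping when crossing between weak and strong equations, especially in the \rul{effect} split and in the $\update$-cancellation via the axiom.
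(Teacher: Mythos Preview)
Your proposal is correct and follows essentially the same route as the paper's proof: the same three-way case split for accessors in part~1 (with the same two pure equations in the mixed case), and the same \rul{effect} split followed by Corollary~\ref{coro:downcast} and $\update$-cancellation for part~2. The only cosmetic difference is in the ``both non-pure'' subcase, where you compose with $h_X$ and invoke item~\ref{lemm:lkppureequal} of Lemma~\ref{lemm:eqns}, while the paper composes with $h_X\circ\update$ and uses the axiom directly; the two are interchangeable.
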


\begin{proof}

\begin{enumerate}

\item 
  
  We prove that for any accessors $ a_1^\acc, a_2^\acc : X\to Y$ there are three
  cases: 
  \begin{enumerate} 
    \item either they are both pure and $a_1 \eqs a_2$ is the required equation
      between pure terms. 
    \item either they are both accessors and it can be derived from 
      $ a_1 \eqs  a_2 $ that $v_1 \eqs v_2$ for some pure terms
      $v_1^\pure, v_2^\pure :V\to Y$. 
    \item or one of them is pure and the other one is an accessor and it can be
      derived from $ a_1 \eqs  a_2 $ that $v_1 \eqs v_2$ and $w_1\eqs w_2$ 
      for some pure terms $v_1^\pure, v_2^\pure :V\to Y$ and 
      $w_1^\pure, w_2^\pure :X \to Y$.
    \end{enumerate}
    We prove, moreover, that the converse also hold.
  \begin{enumerate}
  \item As already mentioned, if $a_1$ and $a_2$ are both pure and $a_1 \eqs
    a_2$ is the required equation between pure terms. 
  \item If neither $a_1$ nor $a_2$ is pure, then according to
    Proposition~\ref{prop:canonical-form} 
    $a_1^\acc \eqs v_1^\pure \circ \lookup\circ \pa_X^\pure$ and
    $a_2^\acc \eqs v_2^\pure \circ \lookup\circ \pa_X^\pure$ for some pure terms 
    $ v_1^\pure, v_2^\pure :V\to Y$.
    \begin{itemize}
    \item 
      Starting from the equation $a_1^\acc\eqs a_2^\acc : X\to V$ we thus get
      $v_1^\pure\circ\lookup\circ\pa_X \eqs 
      v_2^\pure\circ\lookup\circ\pa_X:X\to Y$. 
      Then, using the assumption, 
      for any function $h_X^\pure:\unit\to X$, we have that 
      $v_1^\pure\circ\lookup\circ \pa_X\circ h_X\circ\update \eqs
      v_2^\pure\circ\lookup\circ \pa_X\circ h_X\circ\update$.
      Now $\pa_X\circ h_X^\pure \eqs \id_\unit: \unit\to\unit$. 
      This, together with the axiom $\lookup \circ \update \eqw \id_{V}$
      and the replacement rule for $\eqw$ (which can be used here because
      both $v_1$ and $v_2$ are pure) yield $ v_1^\pure \eqw v_2^\pure $. 
      As the latter are both pure terms we also have 
      $v_1^\pure\eqs v_2^\pure: V\to Y$. 
    \item 
      Conversely, if $ v_1^\pure \eqs v_2^\pure : V\to Y$ then 
      $ v_1^\pure\circ\lookup^\acc\circ\pa_X^\pure\eqs
      v_2^\pure\circ\lookup^\acc\circ\pa_X: X\to Y$, which means that
      $a_1^\acc\eqs a_2^\acc : X\to Y$.
    \end{itemize}
  \item The only remaining case is w.l.o.g. if $a_1$ is pure and $a_2$ is not. 
    \begin{itemize}
    \item Then $a_2^\acc=v_2^\pure\circ\lookup^\acc\circ \pa_X^\pure$ from
      Proposition~\ref{prop:canonical-form} as previously
      and $v_1^\pure=a_1^\pure\circ h_X^\pure\circ\pa_V:V\to Y$ satisfies
      $v_1^\pure\eqs v_2^\pure$ for any assumed $h_X^\pure:\unit\to X$.
      Indeed from $a_1^\pure\eqs v_2^\pure\circ\lookup^\acc\circ \pa_X^\pure$ we
      get 
      \begin{equation}\label{eq:fullpureacc}
        a_1^\pure\circ h_X^\pure\eqs
        v_2^\pure\circ\lookup^\acc\circ\pa_X^\pure\circ h_X^\pure .
      \end{equation}
      But, on the one hand, $a_1^\pure\circ h_X^\pure:\unit\to Y$ so that
      point~\ref{lemm:onepureacc} in Lemma~\ref{lemm:eqns} gives 
      $a_1^\pure\circ h_X^\pure\eqs v_1^\pure\circ\lookup^\acc$ with 
      $v_1^\pure=a_1^\pure\circ h_X^\pure\circ\pa_V^\pure:V\to Y$.
      On the other hand, $\pa_X^\pure\circ h_X^\pure\eqs\id_\unit^\pure$ so that
      $v_2^\pure\circ\lookup^\acc\circ \pa_X^\pure\circ h_X^\pure\eqs
      v_2^\pure\circ\lookup^\acc$. 
      Thus Equation~(\ref{eq:fullpureacc}) rewrites as 
      $v_1^\pure\circ\lookup^\acc\eqs v_2^\pure\circ\lookup^\acc$ and
      point~\ref{lemm:lkpuseless} in Lemma~\ref{lemm:eqns} yields 
      \begin{equation}\label{eq:pureaccfirst}
        a_1^\pure\circ h_X^\pure\circ\pa_V^\pure=v_1^\pure\eqs v_2^\pure : V\to Y. 
      \end{equation}
      Thus now we also have $a_2^\acc \eqs 
      v_2^\pure\circ\lookup^\acc\circ \pa_X^\pure \eqs
      a_1^\pure\circ h_X^\pure\circ\pa_V^\pure\circ\lookup^\acc\circ\pa_X^\pure \eqs
      a_1^\pure\circ h_X^\pure\circ\pa_X^\pure$. From the original equation 
      $a_1^\pure\eqs a_2^\acc$ we finally get
      \begin{equation}\label{eq:pureaccsecond}
        a_1^\pure\circ h_X^\pure\circ\pa_X^\pure \eqs a_1^\pure : X \to Y.
      \end{equation}
    \item Conversely, we start from $v_2^\pure$ and $a_1^\pure$ satisfying both
      Equations~(\ref{eq:pureaccfirst}) and~(\ref{eq:pureaccsecond}).
      Then, we define $a_2^\acc=v_2^\pure\circ\lookup^\acc\circ\pa_X^\pure$
      which satisfies 
      $a_2^\acc\eqs a_1^\pure\circ
      h_X^\pure\circ\pa_V^\pure\circ\lookup^\acc\circ\pa_X^\pure$ thanks to
      Equation~(\ref{eq:pureaccfirst}). The latter is also 
      $a_2^\acc\eqs a_1^\pure\circ h_X^\pure\circ\pa_X^\pure$ which is thus
      $a_2^\acc\eqs a_1^\pure$ thanks to Equation~(\ref{eq:pureaccsecond}).
    \end{itemize}
  \end{enumerate}

\item 
The rule \rul{effect} for states means that 
two modifiers coincide as soon as they return the same result 
and modify the state in the same way. 
This means that $f_1^\modi\eqs f_2^\modi$ if and only if 
$f_1\eqw f_2$ and $\pa_A \circ f_1 \eqs \pa_A \circ f_2$.
Thanks to Corollary~\ref{coro:downcast} 
the equation $f_1\eqw f_2$ is equivalent to an equation between accessors.
It remains to prove that the equation 
$\pa_A \circ f_1 \eqs \pa_A \circ f_2$ is also equivalent to 
an equation between accessors. 

For $i\in\{1,2\}$, since $\pa_A \circ f_i\colon A \to \unit$, 
Proposition~\ref{prop:canonical-form} 
says that $\pa_A \circ f_i \eqs \update \circ a_i$ 
for some accessor $a_i\colon A \to V$.
Thus, $\pa_A \circ f_1 \eqs \pa_A \circ f_2$
if and only if $\update \circ a_1 \eqs \update \circ a_2$.
Let us check that this equation is equivalent to $a_1 \eqs a_2$. 

Clearly if $a_1 \eqs a_2\colon A \to V$ then 
$\update \circ a_1 \eqs \update \circ a_2$.
Conversely, if $\update \circ a_1 \eqs \update \circ a_2\colon A \to \unit$
then $\lookup \circ \update \circ a_1 \eqs \lookup \circ \update \circ a_2$ 
and since $\lookup \circ \update \eqw \id_V$ we get 
$a_1 \eqw a_2$, which is the same as $a_1 \eqs a_2$ because 
$a_1$ and $a_2$ are accessors.

Thus, $\pa_A \circ f_1 \eqs \pa_A \circ f_2$
if and only if $a_1 \eqs a_2$, as required. 

\end{enumerate}

\end{proof}

\end{document}